\def\BibTeX{{\rm B\kern-.05em{\sc i\kern-.025em b}\kern-.08em
    T\kern-.1667em\lower.7ex\hbox{E}\kern-.125emX}}
\def\R{\mathbb{R}}
\def\X{\mathscr{X}}
\def\Y{\mathscr{Y}}
\def\B{\mathscr{B}}
\def\D{\mathscr{D}}
\def\U{\mathscr{U}}
\def\V{\mathscr{V}}
\def\N{\mathscr{N}}
\def\Ss{\mathscr{S}}
\def\Rs{\mathscr{R}}
\def\W{\mathscr{W}}
\def\K{\mathscr{K}}
\newcommand{\ang}[1]{\left\langle #1\right\rangle}
\newcommand{\matrices}[1]{\begin{bmatrix} #1\end{bmatrix}}
\newcommand{\ul}[1]{\,\underline{#1}} 
\newtheoremstyle{mytheoremstyle} 
{\smallskipamount}                    
{\smallskipamount}                    
{\itshape}                   
{}                           
{\bfseries}                   
{.}                          
{.5em}                       
{}  %
\theoremstyle{mytheoremstyle}
\newtheorem{defn}{Definition}
\newtheorem{problem}{Problem}
\newtheorem{prop}{Proposition}
\newtheorem{assum}{Assumption}
\newtheorem{thm}{Theorem}
\newtheorem{lem}{Lemma}
\newtheorem{rem}{Remark}
\newtheorem{cor}{Corollary}
\DeclareMathOperator{\rank}{rank}
\DeclareMathOperator{\im}{Im}
\DeclareMathOperator{\ke}{Ker}
\DeclareMathOperator{\col}{col}
\begin{document}

\title{
On the Output Redundancy of LTI Systems:\\ A Geometric Approach with Application to Privacy
}

\author{
    Guitao Yang, 
    Alexander J. Gallo, 
    Angelo Barboni,
    Riccardo M.G. Ferrari, \\
    Andrea Serrani, \IEEEmembership{Member, IEEE}, 
    and Thomas Parisini, \IEEEmembership{Fellow, IEEE}
    \thanks{This work has been partially supported by European Union's Horizon 2020 research and innovation program under grant agreement no. 739551 (KIOS CoE).}
    \thanks{G. Yang is with the Department of Electrical and Electronic Engineering, Imperial College London, London SW7 2AZ, UK (e-mail: guitao.yang@imperial.ac.uk).}
    \thanks{A. J. Gallo and R. M. G. Ferrari are with the Delft Centre for Systems and Control, Technical University of Delft, Delft, The Netherlands (e-mail: a.j.gallo@tudelft.nl, r.ferrari@tudelft.nl)}
    \thanks{A. Barboni is with the Zurich Insurance Group, Zurich, Switzerland (e-mail: angelobarboni.eng@gmail.com)}
    \thanks{A. Serrani is with the Department of Electrical and Computer Engineering, The Ohio State University, Columbus, OH 43210, USA  (e-mail: serrani.1@osu.edu).}
    \thanks{T. Parisini is with the Department of Electrical and Electronic Engineering, Imperial College London, SW72AZ London, U.K., He is also with the Department of Electronic Systems, Aalborg University, Denmark, and with the Department of Engineering and Architecture, University of
Trieste, Italy (e-mail: t.parisini@imperial.ac.uk).}
}

\maketitle
\begin{abstract}
    This paper examines the properties of output-redundant systems, that is, systems possessing a larger number of outputs than inputs, through the lenses of the  geometric approach of Wonham et al. We begin by formulating a simple output allocation synthesis problem, which involves ``concealing" input  information from a malicious eavesdropper having access to the system output, while still allowing for a legitimate user to reconstruct it. It is shown that the solvability of this problem requires the availability of a redundant set of outputs. This very problem is instrumental to unveiling the fundamental geometric properties of output-redundant systems, which form the basis for our subsequent constructions and results. As a direct application, we demonstrate how output allocation can be employed to effectively protect the information of input information from certain output eavesdroppers with guaranteed results.
\end{abstract}
\begin{IEEEkeywords}
    Output Redundancy, System Invertibility, Geometric Approach, LTI Systems, 
\end{IEEEkeywords}
\section{Introduction}
In this paper, the properties and the structure of output-redundant systems are analyzed in a geometric context, through the examination of the problem of protecting  sensitive exogenous input information from a malicious eavesdropper having access to measured outputs. 

The progressive increase in complexity of Cyber-Physical Systems (CPS), stemming from the integration of widespread communication and computation technologies, have made them prime targets for all sorts of cyber-attacks and adversarial interventions~\cite{osti_1505628}. Among the possible actions taken by malicious agents are eavesdropping attacks~\cite{chong2019tutorial}, i.e., attacks which breach the confidentiality of the information being transmitted within the control system, enabling them to access private user information.
This problem is particularly relevant for cases, such as smart grids or heating, ventilation, and air conditioning (HVAC) systems, where measurement outputs may contain sensitive information about user behavior, such as energy consumption, or room occupancy \cite{kumar2019smart,chen2018building,alisic2020ensuring}.
Among the schemes proposed by the control-theoretic community to achieve privacy  is that of achieving obfuscation of the relevant information by affine transformations of the output, whereby a disturbance is injected onto the transmitted signal. Without the pretense of providing a complete overview, examples of this strategy are reported in \cite{sultangazin2020symmetries,naseri2022privacy,hayati2022privacy}. 

In this paper, the very problem of obfuscating exogenous inputs is used to introduce and investigate fundamental geometric concepts associated to the property of output redundancy in linear systems.
To set the stage for our discussion, we consider LTI systems of the form
\begin{equation}
    \label{eq:system}
    \begin{aligned}
        x^+ &= A x  + B u \\
        y &= C x
    \end{aligned}
\end{equation}
where $x, u$ and $y$ are the state, input and output, respectively, seen as elements of appropriate finite-dimensional vector spaces in the field of real numbers. 
The operator $+$ in $x^+$ denotes the evolution of the state variable in a general sense, which encompasses continuous-time evolution ($dx/dt$), discrete-time evolution $x(t+1)$ or even variables evolving at non-equally spaced time instances ($x(t_{k+1})$). In~\eqref{eq:system}, $A$, $B$ and $C$ represent maps among vector spaces. Assuming access to the measured output $y$, as well as the system parameters $A$, $B$, and $C$, under certain system invertibility conditions an attacker may recover the exact system input, which constitutes the private and sensitive information being transmitted within the control systems. However, if the output signal is modified as 
\begin{equation*}
    {y}_d = y + Dd,
\end{equation*}
where $d$ represents a disturbance intentionally injected through the map $D$, the eavesdropper might be misled by the resulting corruption of the available information, and may lose its capability to recover sensitive information. 
In the considered scenario, we want to evaluate this simple strategy provided that a legitimate user is able to reconstruct the required information from the corrupted data, whereas the eavesdropper can not. 
The considered setup is illustrated in the schematic diagram in Figure~\ref{fig:sche}, where an ``inverse system'' (for lack of better terminology) is introduced to reconstruct the exogenous input. 
%

For this to be possible, we show that it is necessary for the system to be {\em output-redundant}, i.e., that the number of scalar measurements, $p\in\mathbb{N}$, be strictly larger than the dimension $m\in\mathbb{N}$ of the input signal $u(\cdot)$ to be concealed from the eavesdropper.
We then investigate the geometric properties of the ensuing output-redundant system~\eqref{eq:system}, evaluating under what conditions it is possible to design the map $D$, such that two objectives can be achieved simultaneously: firstly, the input and state can be reconstructed by some legitimate user (a \textit{defender}) with some knowledge of $D$; secondly, that a malicious agent, 
without knowledge of $D$, obtains a biased estimate of the input.
The problem is posed as an {\em output allocation problem} via the selection of the {\em output allocation map $D$}, conducted on the basis of the geometric interpretation of output redundancy. 
Intuitively, the difference between the number of inputs and outputs of a system provides more ``flexibility'' to inject a disturbance capable of obfuscating the input from a malicious agent, without affecting reconstruction by a legitimate defender. 
We investigate the properties associated with output redundancy of a system in order to make this notion of ``flexibility'' precise.
We note that the need for ``flexibility'' finds correspondence in the method proposed in \cite{hayati2022privacy}, whereby it is the lifting of the output signal to be made private into a larger space (and therefore the inclusion of additional ``flexibility'') that enables the design of a scheme that ensures privacy, without any performance loss.
%

To make the presentation reasonably self-contained, apart from output redundancy, additional properties will be assumed to hold for system~\eqref{eq:system}, namely minimality and left-invertibility.

\subsection{Brief Literature Review}
Existing studies have examined output redundancy from various perspectives. A somewhat customary assumption dictates that the sensors measuring the system state are duplicated, with subsets of these sensors potentially susceptible to faults \cite{yang2019sensor,winkler2021using,yang2022sensor}. In the cited references, output redundancy is leveraged to mitigate the impact of faulty sensors, whereas the overall system performance continues to rely on the non-faulty sensors. The studies \cite{ren2023effects,cristofaro2024adaptive} consider setups where certain measurements are linearly dependent on others, which presents a broader scenario compared to duplicated sensors. Specifically, \cite{ren2023effects} demonstrates that the inclusion of redundant sensors improves estimation performance by reducing the variance of the estimated state error. On the other end, in \cite{cristofaro2024adaptive}, the authors propose a methodology to dynamically select a weighted combination of available measurements to reject constant biases.
In contrast to the notion of output redundancy explored in the aforementioned works, \cite{corona2016some} defines output redundancy as a dual concept of input redundancy, carrying over a taxonomy that distinguishes between the notions of strong and weak output redundancy. In that paper, a system is termed to be strongly output-redundant when the output matrix lacks full rank, whereas weak output redundancy occurs when the number of outputs exceeds the number of inputs. While this perspective enriches the concept of output redundancy beyond the scope of previous works, which primarily fall under the strong output redundancy paradigm, the focus of the authors in \cite{corona2016some} is centered on the output regulation problem (hence, in a certain sense on right-invertibility), and does not consider aspects related to left-invertibility, which we believe to be of fundamental importance in the considered setup.
\begin{table*}[htbp]
    \centering
    \SetTblrInner{rowsep=1.2mm}
    \begin{tblr}{c|c|c}
        & Input allocation & Output allocation \\
        \hline
        System & 
            $ \left\lbrace
               \begin{aligned} 
                    x^+ &= Ax + Bu \\
                    y &= Cx    
             \end{aligned}\right. $ & 
             $ \left\lbrace
               \begin{aligned} 
                    \hat x^+ &= A \hat x  - L ( y_d - \hat y) \\
                    \hat y &= C \hat x      
             \end{aligned}\right. $
             \\
        Control variable & input $u$ & output error $y_d - \hat y$ \\
        Control objective & $\| y_{\text{ref}} - y \| \rightarrow$ small & $ \| u - \hat u \|  \rightarrow$ small\\
        Allocation mechanism & $u = u_c + B^\perp d$ \cite{zaccarian2009dynamic} & $y_d = y + Dd$ (this paper) \\
        \hline
    \end{tblr}
    \caption{Duality between dynamic input and output allocation. In the input allocation section, $u_{c}$ denotes the input provided by a given tracking controller, and $y_{\rm ref}$ a reference output to be tracked.}
    \label{fig:table_duality}
\end{table*}
%
To address this gap, this paper provides a study of output redundancy and ensuing properties from a coordinate-free geometric perspective; algorithms generated from this approach apply to a variety of formulations, encompassing both continuous and discrete-time linear systems.
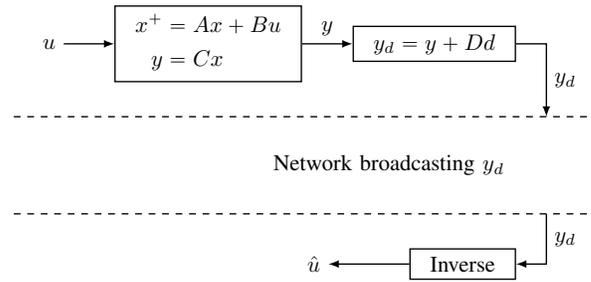
\begin{figure}[t]
    \centering
    \scalebox{0.85}{\begin{tikzpicture}[semithick]
    \tikzset{block/.style = {draw, inner xsep = 2ex}}

    \node[block] (plant) at (0,0) {$ \begin{aligned}
        x^+ &= Ax + Bu \\
        y &= C x\end{aligned}$};
    \node[block, 
        right=5ex of plant.east] (allocator) {$ y_d = y + D d$};
    \draw[-latex] (plant) -- (allocator) node[above, midway] {$y$};

    \coordinate[right=3ex of allocator.east] (ne);
    \coordinate (center_top) at ($(plant.west)!0.5!(allocator.east)$);    
    
    \node[block, 
        below=20ex of allocator.south east,
        anchor=east] (observer) {Inverse};
    \node[
        left=8ex of observer.west] (ux) {$\hat u$};
    \draw[-latex] (observer) -- (ux) node [above, midway] {};

    \coordinate[xshift = -6ex] (sw) at (plant.west |- ux.west);

    \node (network) at ($(center_top |- plant.south)!0.5!(observer.north)$) {Network broadcasting $y_d$};
    \def\nheight{3ex}
    \def\nxpad{4ex}
    \coordinate (ntop) at ($(network.north) + (0,\nheight)$);
    \coordinate (nbot) at ($(network.south) + (0,-\nheight)$);
    \draw[dashed] ($(ne|-ntop) + (\nxpad,0)$) -- ($(sw|-ntop) + (-\nxpad,0)$);
    \draw[dashed] ($(ne|-nbot) + (\nxpad,0)$) -- ($(sw|-nbot) + (-\nxpad,0)$);

    \draw[-latex] (allocator.east) -| (ne |- ntop) node [near end, right] {$y_d$};
    \draw[-latex] (ne |- nbot) |- (observer.east) node [near start, right] {$ y_d$} ;
    \node[
        left=5ex of plant.west] (u) {$u$};
    \draw[-latex] (u) -- (plant) node [above, midway] {};
\end{tikzpicture}}
    \caption{Schematic diagram of our considered setup: the upper part presents the exogenous input, the system, and the obfuscation strategy; the middle section represents the network over which $y_d$ is broadcast, with the eavesdropper; the lower part of the diagram illustrates the actions taken by a legitimate agent to reconstruct an estimate of $u$, denoted by $\hat{u}$.}
    \label{fig:sche}
\end{figure}

As mentioned, output redundancy is naturally associated with the null space of a rank-deficient output map, where it is often assumed that the rank of the output map equals the dimension of the input space. However, as mentioned in \cite{corona2016some}, this characterization does not encompass all potential scenarios, as it disregards epic (i.e., surjective) output operators with rank exceeding the dimension of the input space. This latter scenario signifies an intrinsic redundancy within the system, where only a proper subset of outputs is sufficient to uniquely determine the input trajectory. 
%
In this paper, we employ geometric tools to provide a comprehensive analysis of output-redundant LTI systems and delve into their properties through  suitable decomposition of both the state space and the output space.
Subsequently, we rigorously formulate the problem that we introduced at the beginning of the article and solve it by leveraging the properties of the output-redundant system that we derived. It is worth noticing that the findings presented here may have broader applications beyond the scope of the considered privacy preservation problem, for instance to fault detection and isolation~\cite{yang2022joint}.

Unsurprisingly, the problem dealt with in the paper can be regarded as a dual  of the input allocation problem formulated in~\cite{zaccarian2009dynamic}, where input redundancy (defined as properties of the system transfer matrix) is exploited to optimize additional metrics defined on the system input, without impacting output tracking performance. State-space input allocation strategies are also developed in \cite{serrani2012output} by leveraging a geometric analysis, which provides an alternative point of view to input redundancy. In our dual settings, however, we shift focus to an observation problem, treating the estimated state as  regulated variables influenced by a disturbance that has been intentionally injected (i.e., {\em allocated}) into the system output, hence affecting the state of a Luenberger-type observer. By dynamically allocating redundant measurements, we aim to impair the eavesdropper’s ability to reconstruct the system’s input. We refer to this strategy as the allocating mechanism or simply the allocator. The underlying principles of this duality are visually summarized in Table~\ref{fig:table_duality}. It should be noted that a more general definition of input redundancy and related taxonomy  (alternative to those of~\cite{zaccarian2009dynamic} and~\cite{serrani2012output}) that disposes of the assumption or right-invertibility has been recently proposed in \cite{Kreiss2021}. As systems invertibility is central to our discussion, a dualization of the more general definitions proposed therein has not been pursued in this work. 


%

\subsection{Contributions and outline of the paper}
The main contribution of this paper is as follows:
\begin{enumerate}
    \item We demonstrate and substantiate the necessity of output redundancy in a system to solve the problem of concealing the system's input while preserving the ability of a legitimate user to reconstruct it, as stated in the introduction.
    \item We provide a more comprehensive approach to output redundancy for LTI systems, where we formally distinguish between a strong and weak characterization in a geometric sense. This is a step beyond the frameworks in \cite{yang2019sensor,ren2023effects,cristofaro2024adaptive}, which are limited to a notion of strong output redundancy.
    \item We explore the properties of both strongly and weakly output-redundant systems under a geometric approach framework, which applies to various types of systems. Moreover, the properties are coordinate-free due to the merit of the geometric approach.
    \item We illustrate how the derived properties are leveraged to solve the problem of preventing the output eavesdropper from reconstructing the true system's input, while maintaining reconstructability of this latter by a legitimate user. 
\end{enumerate}

The paper is organized as follows. In Section~\ref{sec:pre}, we state the notation and provide essential concepts and notions of the geometric approach used in our work. We formalize the problem in Section~\ref{sec:p>m}, and prove the necessity of output redundancy for its solvability. In Section~\ref{sec:SOR} and Section~\ref{sec:WOR}, we present the properties and solutions to the synthesis problem by demonstrating the utilization of the derived properties of strong and weak output-redundant systems, respectively. Section~\ref{sec:simu} showcases simulation results validating the solutions and analysis in Section~\ref{sec:WOR}. Finally, concluding remarks are offered in Section~\ref{sec:end}.

\section{Notation and Preliminaries}
\label{sec:pre}
\subsection{Notation}
Throughout this paper, we use $\mathtt A$ to denote the matrix representation of a linear map $A:\X\to\Y$ among finite-dimensional vector spaces $\X$, $\Y$ over the field $\R$, i.e., $ \mathrm{Mat} (A) = \mathtt A$. For a finite-dimensional vector space $\X$, $\X^{\prime}$ denotes its dual space, i.e., the space of linear functionals $x^{\prime}:\X\to\R$. For a map $A:\X\to\Y$, $A^{\prime}:\Y^{\prime}\to\X^{\prime}$ denotes the dual map satisfying $\mathrm{Mat}(A^{\prime})=\mathtt{A}^{\top}$.  The external (internal) direct sum of vector spaces $\V$, $\W$ (subspaces $\V,\W\subseteq \X$) is denoted by $\V\oplus\W$, with elements $v\oplus w$, where $v\in\V$ and $w\in\W$~\cite[Section~0.3]{wonham1985linear}. The symbols $I$ and $\mathtt I$ stand respectively for the identity map and the identity matrix of compatible dimension, whereas 
$\mathbf{\mathtt 0}$ is an all-zeros matrix with compatible dimension.
For a symmetric square matrix $\mathtt M\in\R^{n\times n}$, $\mathtt M \succ (\prec)\ 0$ denotes  positive (negative) definiteness.
$\mathtt M^\dagger$ represents the pseudo inverse of $\mathtt M$. 
An LTI system is usually represented as $\{C,A,B\}$ with output map $C$, state map $A$, and input map $B$, respectively.
For an endomorphism $A: \X \to \X$, $\sigma(A)$ and $\sigma(\mathtt A)$ stand for the spectrum of the map $A$ and the eigenvalues of the matrix~$\mathtt A$, respectively. 
To streamline the presentation, maps that are instrumental for the completion of commutative diagrams will be defined in the caption of the figures presenting the diagrams themselves. These maps may also be referred to in the text.
\subsection{Preliminaries on Geometric Approach}\label{subsec:prelim}
This preliminary section follows both notation and concepts provided in \cite{wonham1985linear} and~\cite{massoumnia1986geometric}.
Let $C:\X \to \Y$ be a map; then, $C$ is an \textit{epimorphism} (or $C$ is \textit{epic}, onto) if $\im C = \Y$, and  $C$ is a \textit{monomorphism} (or $C$ is \textit{monic}, one-to-one) if $\ke C =0$. We say two finite-dimensional linear spaces $\X_1$ and $\X_2$ are \textit{isomorphic},  $\X_1 \simeq \X_2$, if they are dimensionally equal. If $\Ss \subseteq \Y$, $C^{-1}\Ss$ denotes the \textit{pre-image} of $\Ss$ under $C$. 
Let $C:\X \to \Y$, and let $\V \subset \X$ be a subspace with {\em insertion map} $V:\V \to \X$, i.e., $\V=\im{V}$ and $V$ is monic. The \textit{domain restriction} of $C$ to $\V$ is denoted by  $C|\V = C V$. 
Moreover, suppose $\im C \subseteq \W \subseteq \Y$. The \textit{codomain restriction} of $C$ from $\W$, denoted as $\W|C$ satisfies $W(\W|C)=C$, where
$W: \W \to \Y$ is the insertion map of $\W$.  

A subspace $\V\subseteq\X$ is said to be {\em invariant} with respect to a map $A:\X\to \X$ if $A\V\subseteq\V$. 
For an invariant subspace $\V$, we denote by $A|\V:\V\to\V$ the {\em restriction of $A$ to $\V$}, i.e., the unique map satisfying $AV=V(A|\V)$, where $V:\V \to \X$ is the insertion map of $\V$. 
Furthermore, we denote by $A|\X/\V$ or simply by  $\bar A:\X/\V \to \X/\V$ the map induced on $\X/\V$ by $A$, that is, the unique map satisfying $\bar{A}P=PA$, where $P:\X\to\X/\V$ is the canonical projection on $\X/\V$, the factor space $\X$ modulo $\V$. 
For a map $A:\X\to \X$ and subspaces  $\B\subseteq \X$, $\K \subseteq \X$, we define the smallest $A$-invariant subspace that contains $\B$ as $\ang{A\,|\, \B}$ and the largest $A$-invariant subspace that is contained in $\K$ as $\ang{\K\,|\, A}$ \cite[Section~2.1]{massoumnia1986geometric}.  

Let $A:\X\to \X$, $B:\U\to \X$ and $C:\X\to \Y$. We say a subspace $\V \subseteq \X$ is \textit{$(A,B)$-invariant} if there exists a map $F:\X\to \U$ such that $ (A+BF)\V \subseteq \V$. If this is the case, we say that the state-feedback map $F$ is a friend of $\V$, and denote the class of all friends of $\V$ by $\mathbf{F}(\V)$.
We say a subspace $\W \subseteq \X$ is \textit{$(C,A)$-invariant} if there exists a map $L:\Y\to \X$ such that $ (A+LC)\W \subseteq \W$. If this is the case, we also say that the output-injection map $L$ is a friend of $\W$, and denote the class of all friends of $\W$ by $\mathbf{L}(\W)$.
The notation $L\in \mathbf{L}(\W)$ reads ``$L$ is a friend of $\W$". 
Moreover, let $\B,\K\subseteq \X$.  We write $\underline{\V}(\K)$ to denote the class of $(A,B)$-invariant subspaces \textit{contained in} $\K$, and by $\underline{\W}(\B)$ the class of $(C,A)$-invariant subspaces \textit{containing} $\B$. The supremal and infimal elements of $\underline{\V}(\K)$ and $\underline{\W}(\B)$  are denoted by $\V^*(A,B;\K)$ and $\W^*(C,A;\B)$, respectively, or simply by $\V^*$ and $\W^*$~\cite[Section~4.2]{wonham1985linear}, \cite[Section~2.2]{massoumnia1986geometric}.

We say a subspace $\Rs \subseteq \X$ is a \textit{controllability subspace} if  $ \Rs = \ang{A+BF | \im{BG}}$ for some state feedback map $F:\X \to \U$ and input selection map $G:\U \to \U$. 
We say a subspace $\Ss \subseteq \X$ is an \textit{unobservability subspace} if $ \Ss = \ang{\ke HC\,|\,A+LC }$ for some output-injection map $L:\Y \to \X$ and measurement mixing map $H:\Y \to \Y$.
For $\B,\K\subseteq \X$, we write 
$\underline{\Rs}(\K)$ to denote the class of controllability subspaces \textit{contained in} $\K$, and
$\underline{\Ss}(\B)$ the class of unobservability subspaces \textit{containing} $\B$. 
The supremal and infimal elements of $\underline{\Rs}(\K)$ and $\underline{\Ss}(\B)$  are denoted by $\Rs^*(A,B;\K)$ and $\Ss^*(C,A;\B)$, respectively, or simply by $\Rs^*$ and $\Ss^*$, respectively~\cite[Section~5.1]{wonham1985linear}, \cite[Section~2.3]{massoumnia1986geometric}, \cite{basile1992controlled}.
\section{Problem formalization and the necessity of output redundancy} \label{sec:p>m}
We recall that the motivating problem described in the Introduction consists in injecting a disturbance signal on the output measurements of an LTI system to provide obfuscation of the input against eavesdroppers, whilst ensuring that a legitimate user can reconstruct the input from the perturbed output. This latter aspect of the problem is concerned with left inversion of a system affected by an unmeasurable output disturbance. Specifically, consider the system
\begin{subequations}\label{eq:sys}
    \begin{align}
        x^+ &= Ax + Bu, \quad x(0)=x_{0}\label{eq:sys:x}
        \\
        y_d &= Cx + Dd\label{eq:sys:y_d}
    \end{align}
\end{subequations}
where $x,x_{0} \in \X \simeq \R^n$, $u \in \U \simeq \R^m$, $d\in\D\simeq\R^{q}$, $y_d \in \Y \simeq \R^p$ are the system states, input, output disturbance and perturbed output, respectively; $A : \X \to \X, B:\U\to \X$, $D:\D\to\Y$ and $C: \X \to \Y$ are the corresponding maps.
For reasons that will become clear in the sequel, we define the aggregate input $u_{a}:=u\oplus d\in\U\oplus\D=:\U_{a}$, where $\U\oplus\D$ denotes the external direct sum of $\U$ and $\D$. 

We make the following standing assumptions on system~\eqref{eq:sys}:
\begin{assum}\label{as:obsv}
    The pair $(A,B)$ is controllable, and the pair $(C,A)$ is observable.
\end{assum}
\begin{assum}\label{as:li}
   The triplet $(C,A,B)$ is left-invertible (see~\cite{morse1971status},\cite[Section~4.3]{basile1992controlled} for definitions and properties.)
\end{assum}
\begin{rem}
Assumption~\ref{as:obsv} is made for the sake of simplicity, and can be removed at the cost of undue (and unnecessary) complications. On the other hand,  Assumption~\ref{as:li}, ensures reconstructibility (generally, by means of a non-causal system) of the input signal when $d=0$, which is clearly a necessary condition in the considered setup~\cite{Silverman1969}. 

{\em In the statements of the forthcoming results, we shall refrain from  mentioning explicitly  that  Assumptions~\ref{as:obsv} and~\ref{as:li} hold for~\eqref{eq:sys}, as these assumptions will henceforth be considered valid with no exception.}
\end{rem}

In order to formalize the problem dealt with in this paper, we introduce a definition of left-invertibility for system~\eqref{eq:sys} that  explicitly considers the presence of the disturbance:
%
\begin{defn}\label{def:li_d}
    Consider system~\eqref{eq:sys} with initial condition $x_0 \in \X$, input signal $u(\cdot) \in \mathbb U$, disturbance signal $d(\cdot)\in \mathbb D$, and output signal $y_{d}(\cdot) \in \mathbb Y$, where $\mathbb U$, $\mathbb D$, and $\mathbb Y$ are suitably defined spaces of functions defined over a given  interval $[0,T]$, $T>0$. Define 
    \begin{equation}\label{eq:operator}
        \theta_{x_0}(u,d) : \mathbb{U} \times \mathbb D \to \mathbb Y
    \end{equation}
    as the operator\footnote{For the sake of simplicity, we have used (and will henceforth use) the more economical expression~\eqref{eq:operator} in place of $  \theta_{x_0}(u(\cdot),d(\cdot))$.} that maps signals $u(\cdot), d(\cdot)$ to outputs $y_{d}(\cdot)$ over the interval $[0,T]$, parameterized by the initial condition $x_{0}$.  For example, for continuous-time systems,
    \[
     \theta_{x_0}(u,d)(t) =Ce^{At}x_{0}+\int_{0}^{t}Ce^{A(t-\tau)}Bu(\tau)d\tau+Dd(t)
    \]    
    for $t\in[0,T]$.
    System~\eqref{eq:sys} is said to be {left-invertible with respect to $u$ under the output disturbance $d$} if for any $x_0\in\X$ 
    \begin{equation}\label{eq:li_d}
        \theta_{x_0}(u_1,d_1) = \theta_{x_0}(u_2,d_2)
    \end{equation}
    implies
    \begin{equation}\label{eq:li_d:condU}
         u_1(t) = u_2(t)\quad \forall t \in [0,T]
    \end{equation}
\end{defn}
\smallskip

    Note that the above definition is not equivalent to  left-invertibility with respect to both $u$ and $d$, as it is not necessary for the condition
    $\theta_{x_0}(0,d_1) = \theta_{x_0}(0,d_2)$ to imply that $d_1(\cdot) = d_2(\cdot)$. However, left-invertibility with respect to the aggregate input $u_{a}=u\oplus d$ trivially implies left-invertibility with respect to $u$ under the disturbance $d$. Equivalent definitions of left-invertibility of systems defined by quadruplets $\{C,A,B,D\}$ are provided in~\cite[Chapter~8]{Trentelman2001}.
\begin{rem}
It should be clear that, due to linearity, system~\eqref{eq:sys} is  left-invertible with respect to $u$ under the output disturbance $d$ if and only if for $x_0=0$ 
    \begin{equation}\label{eq:li_d.bis}
        \theta_{0}(u,d) = \theta_{0}(0,0)
    \end{equation}
    implies $ u(t) = 0$ for all $t \in [0,T]$.
This equivalent formulation of Definition~\ref{def:li_d} will often be used in the sequel.
\end{rem}
\smallskip

The problem dealt with in this work is then stated as follows:

\begin{problem}\label{prob:li_d}
 For system~\eqref{eq:sys}, find $\D \neq 0$ and the map $D: \D \to \Y$ such that system~\eqref{eq:sys} is left-invertible with respect to $u$ under the output disturbance $d$, according to Definition~\ref{def:li_d}.
\end{problem}

\begin{rem}
    We stress that solvability of Problem~\ref{prob:li_d} is necessary to solve the motivating problem defined in the introduction, as a legitimate user must always be able to reconstruct $u(\cdot)$ when the output signal is intentionally corrupted. Conversely, the question of whether privacy can be guaranteed against an eavesdropping attacker for a given choice of the space $\D$ and the map $D$, must be evaluated separately, as it depends on the information available to the attacker.
\end{rem}

Up to this point, no specific properties have been postulated for the map $D$ or the space $\D$. However, the defense strategy should avoid the occurrence of disturbance signals producing no effect on the output, that is, the possibility that $D$ has a non-trivial null space. As a result,  it makes sense to assume that $D$ be monic, and, consequently, $\dim{\D}\leq p$.  This choice has an important implication on the property stated in Definition~\ref{def:li_d}:
\begin{prop}\label{prop:equiv-li}
Assume that $D$ is monic. Then, system~\eqref{eq:sys} is left-invertible with respect to $u$ under the output disturbance $d$ if and only if it is left-invertible with respect to the aggregate input $u_{a}=u\oplus d$.
\end{prop}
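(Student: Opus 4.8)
The plan is to prove the two implications of the biconditional separately, with the nontrivial direction being that left-invertibility with respect to $u$ under the disturbance $d$ (in the sense of Definition~\ref{def:li_d}) implies left-invertibility with respect to the aggregate input $u_a = u \oplus d$. The easy direction was already observed in the text: left-invertibility w.r.t.\ $u_a$ trivially gives left-invertibility w.r.t.\ $u$ under $d$, since if $\theta_{x_0}(u_1,d_1) = \theta_{x_0}(u_2,d_2)$ then joint left-invertibility forces both $u_1 \equiv u_2$ and $d_1 \equiv d_2$, and in particular $u_1 \equiv u_2$. So the work is all in the converse, and this is where the hypothesis that $D$ is monic must be used (the remark immediately preceding the proposition already flags that monicity of $D$ is the key assumption).

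For the converse, I would use the equivalent zero-input formulation from the Remark: it suffices to show that $\theta_0(u_a) = \theta_0(0)$, i.e.\ $\theta_0(u,d) = \theta_0(0,0)$, implies $u_a = u \oplus d = 0$, that is, both $u(\cdot) = 0$ and $d(\cdot) = 0$ on $[0,T]$. By hypothesis (left-invertibility w.r.t.\ $u$ under $d$, in its zero-input form~\eqref{eq:li_d.bis}), the condition $\theta_0(u,d) = \theta_0(0,0)$ already yields $u(t) = 0$ for all $t \in [0,T]$. It then remains to deduce $d(\cdot) = 0$. Substituting $u \equiv 0$ back into the expression for $\theta_0$ (e.g.\ in continuous time, $\theta_0(0,d)(t) = \int_0^t C e^{A(t-\tau)} B \cdot 0 \, d\tau + D d(t) = D d(t)$, and analogously for the other notions of state evolution the state contribution vanishes since $x_0 = 0$ and $u \equiv 0$), the identity $\theta_0(u,d) = \theta_0(0,0) = 0$ collapses to $D d(t) = 0$ for all $t \in [0,T]$. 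Since $D$ is monic, $\ke D = 0$, hence $d(t) = 0$ for all $t$. This gives $u_a = 0$, establishing left-invertibility w.r.t.\ the aggregate input.

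The one point requiring a little care — and the closest thing to an obstacle — is phrasing the argument in a way that is genuinely coordinate-free and covers all the notions of ``$x^+$'' simultaneously (continuous time, discrete time, non-uniform sampling), rather than leaning on the continuous-time integral formula as if it were the definition. The clean way to handle this is to note that in every case $\theta_0(u,d)$ decomposes as $\theta_0(u,d) = \Gamma(u) + Dd$, where $\Gamma$ is the (strictly causal) input-to-output map of the undisturbed system $\{C,A,B\}$ with zero initial state, and $Dd$ denotes the pointwise action $t \mapsto D d(t)$; this decomposition holds because the state equation~\eqref{eq:sys:x} does not involve $d$ at all. Then $\theta_0(u_1,d_1) = \theta_0(u_2,d_2)$ reads $\Gamma(u_1) + Dd_1 = \Gamma(u_2) + Dd_2$; after the hypothesis forces $u_1 \equiv u_2$ (equivalently $\Gamma(u_1) = \Gamma(u_2)$), we are left with $Dd_1 = Dd_2$ as functions, i.e.\ $D(d_1(t) - d_2(t)) = 0$ for all $t$, and monicity of $D$ closes the argument. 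I would state it in this abstract form, mentioning the continuous-time formula only as the running example already introduced in Definition~\ref{def:li_d}, so the proof inherits the paper's coordinate-free, setting-agnostic style.
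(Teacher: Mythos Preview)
Your proposal is correct and follows essentially the same argument as the paper: the easy implication is noted as obvious, and for the converse you use the zero-input formulation~\eqref{eq:li_d.bis} to first conclude $u\equiv 0$, then deduce $Dd(t)=0$ and invoke monicity of $D$ to get $d\equiv 0$. Your additional remark about the decomposition $\theta_0(u,d)=\Gamma(u)+Dd$ is a helpful clarification but not a departure from the paper's reasoning.
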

\begin{proof}
We shall prove necessity, as sufficiency is obvious. Let $u(\cdot)\in\mathbb{U}$, $d(\cdot)\in\mathbb{D}$ be such that~\eqref{eq:li_d.bis} hold. By virtue of the assumption of left-invertibility with respect to $u$ under $d$, it follows that $u(t)=0$ for all $t\in [0,T]$, and, consequently
\begin{equation}\label{eq:monic.D}
Dd(t)=0\qquad \forall\, t\in [0,T]
\end{equation}
As $D$ is monic,~\eqref{eq:monic.D} implies $d(\cdot)=0$, from which left-invertibility of the system with respect to $u_{a}$ follows.
\end{proof}

Once we have restricted $D$ to be monic, henceforth we shall identify $\D$ with a subspace of $\Y$, $\D\subseteq\Y$, and let $D:\D\to\Y$ denote the insertion map of $\D$ in $\Y$ (which is monic by definition). As a result, one disposes of the need to define $\D$ and $D$ separately. In the light of these observations and the result of Proposition~\ref{prop:equiv-li}, Problem~\ref{prob:li_d} is replaced by the following:
\begin{problem}\label{prob:li_d2}
 For system~\eqref{eq:sys}, find a non-trivial subspace $\D\subseteq\Y$ (with associated insertion map $D:\D\to\Y$) such that system~\eqref{eq:sys} is left-invertible with respect to the aggregate input $u_{a}=u\oplus d$.
\end{problem}
A somewhat obvious necessary condition for the solvability of Problem~\ref{prob:li_d2} follows:
\begin{thm}\label{thm:nec:OR}
    To solve Problem~\ref{prob:li_d2}, it is necessary that system~\eqref{eq:sys} be {\em output-redundant}, that is, $p=\dim\Y > \dim\U=m$. Furthermore, $q=\dim\D\leq  p-m$.
\end{thm}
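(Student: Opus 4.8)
The plan is to prove the contrapositive: if system~\eqref{eq:sys} is \emph{not} output-redundant, i.e., $p\le m$, then Problem~\ref{prob:li_d2} has no solution, so no non-trivial $\D\subseteq\Y$ can make $(C,A,[B\ D])$ left-invertible with respect to the aggregate input $u_a=u\oplus d$. The key observation is a dimension count: left-invertibility of a triplet $\{C,A,[B\ D]\}$ with input space $\U_a=\U\oplus\D$ requires, as a well-known necessary condition, that the number of outputs be at least the number of aggregate inputs, i.e., $p\ge m+q$. Since we are assuming $D$ monic so that $q=\dim\D\ge 1$ whenever $\D\ne 0$, this forces $p\ge m+1>m$, contradicting $p\le m$. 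The same inequality $p\ge m+q$, read in the redundant case, immediately yields the bound $q\le p-m$, which is the second assertion of the theorem.

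Concretely, I would carry out the steps as follows. First, recall (or reprove in one line) the elementary fact that a left-invertible LTI system must have at least as many outputs as inputs: if $p<m+q$, then for any $t$ the map $u_a(t)\mapsto$ (instantaneous output contribution), together with the fact that past inputs influence the output only through the $n$-dimensional state, cannot be injective on input \emph{trajectories}; more cleanly, one can invoke the Silverman/Markov-parameter characterization cited in the paper — left-invertibility of $\{C,A,[B\ D]\}$ is equivalent to the associated Rosenbrock/system matrix pencil having full column rank, which requires $p\ge m+q$. Second, invoke Proposition~\ref{prop:equiv-li}: because a solution to Problem~\ref{prob:li_d2} is precisely a monic $D:\D\to\Y$ with $\D\ne 0$ rendering the system left-invertible w.r.t.\ $u_a$, any such solution has $q\ge 1$. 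Combining, $p\ge m+q\ge m+1$, hence $p>m$: the system is output-redundant. Third, for the refined bound, simply rearrange $p\ge m+q$ to get $q\le p-m$.

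The only mildly delicate point — and the part I would write with some care — is justifying the necessary condition $p\ge m+q$ for left-invertibility in the coordinate-free, evolution-agnostic setting the paper adopts (continuous time, discrete time, or non-uniform sampling simultaneously). Rather than appealing to a transfer-function argument, which is cleanest in the LTI case but ties one to a particular notion of $x^+$, I would argue directly from Definition~\ref{def:li_d} specialized to $u_a$: suppose $p<m+q$; then at a fixed time $t_0$ the linear map $U_a\oplus U_a\ni(u_a,\,\text{``derivative data''})\mapsto$ output is constrained, and more robustly, consider constant-in-time aggregate inputs — the reachable instantaneous output directions span at most a $p$-dimensional space while the instantaneous input lives in an $(m+q)$-dimensional space, so two distinct constant aggregate inputs $u_a^1\ne u_a^2$ can be chosen (after also matching the state trajectory, using controllability/Assumption~\ref{as:obsv} and freedom in $x_0$, or by a perturbation argument) producing identical outputs, contradicting left-invertibility. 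I expect this instantaneous dimension-count, made rigorous uniformly across the admitted time structures, to be the main obstacle; everything else is bookkeeping. (Alternatively, if the paper is content to lean on the cited references~\cite{morse1971status,Silverman1969,Trentelman2001}, one sentence citing the standard necessary condition $p\ge m+q$ suffices and the proof collapses to two lines.)
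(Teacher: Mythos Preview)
Your proposal is correct and follows essentially the same approach as the paper: the paper's proof is precisely the two-line version you anticipate at the end, invoking the standard necessary condition $m+q\leq p$ for left-invertibility with respect to $u_a$ and then using $q\geq 1$. Your extended discussion of how to justify $p\geq m+q$ in an evolution-agnostic way is unnecessary here, as the paper simply states this as a known necessary condition without further elaboration.
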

\begin{proof}
As $\dim\U_{a}=m+q$, a necessary condition for left-invertibility with respect to $u_{a}$ is $m+q\leq p$, from which the statements of the theorem follow from the fact that $q\geq 1$ by assumption.
\end{proof}
Note that, for an output-redundant system,  Assumption~\ref{as:li} implies that $\dim(\im C) \geq m$. It is precisely the dimension of $\im C$ that prompts the following taxonomy:
\begin{defn}\label{def:SOR}
    System~\eqref{eq:sys} is said to be \textit{strongly output-redundant} (SOR) if it is output-redundant and $\dim(\im C) = m$.
\end{defn}
\begin{defn}\label{def:WOR}
    System~\eqref{eq:sys} is termed \textit{weakly output-redundant} (WOR) if it is output-redundant and $\dim(\im C)= p$.
\end{defn}

The relation between the dimension of $\U, \Y$ and the image of the output map  is illustrated in Figure~\ref{fig:in-out_number} for both the SOR and WOR cases. Note that it may be possible for system~\eqref{eq:sys} to satisfy $m < \dim(\im C) < p$. In this case, the system is said to be {\em generically output-redundant} (GOR). It should be clear that one need only focus on SOR and WOR systems to determine the properties of GOR systems as well.
\begin{figure}[t]
    \centering
    \subcaptionbox{SOR case}{\scalebox{0.49}{
 
\tikzset{
pattern size/.store in=\mcSize, 
pattern size = 5pt,
pattern thickness/.store in=\mcThickness, 
pattern thickness = 0.3pt,
pattern radius/.store in=\mcRadius, 
pattern radius = 1pt}
\makeatletter
\pgfutil@ifundefined{pgf@pattern@name@_xyjy4qbyy}{
\pgfdeclarepatternformonly[\mcThickness,\mcSize]{_xyjy4qbyy}
{\pgfqpoint{0pt}{0pt}}
{\pgfpoint{\mcSize+\mcThickness}{\mcSize+\mcThickness}}
{\pgfpoint{\mcSize}{\mcSize}}
{
\pgfsetcolor{\tikz@pattern@color}
\pgfsetlinewidth{\mcThickness}
\pgfpathmoveto{\pgfqpoint{0pt}{0pt}}
\pgfpathlineto{\pgfpoint{\mcSize+\mcThickness}{\mcSize+\mcThickness}}
\pgfusepath{stroke}
}}
\makeatother

 
\tikzset{
pattern size/.store in=\mcSize, 
pattern size = 5pt,
pattern thickness/.store in=\mcThickness, 
pattern thickness = 0.3pt,
pattern radius/.store in=\mcRadius, 
pattern radius = 1pt}
\makeatletter
\pgfutil@ifundefined{pgf@pattern@name@_v0c7r9mdc}{
\pgfdeclarepatternformonly[\mcThickness,\mcSize]{_v0c7r9mdc}
{\pgfqpoint{0pt}{0pt}}
{\pgfpoint{\mcSize+\mcThickness}{\mcSize+\mcThickness}}
{\pgfpoint{\mcSize}{\mcSize}}
{
\pgfsetcolor{\tikz@pattern@color}
\pgfsetlinewidth{\mcThickness}
\pgfpathmoveto{\pgfqpoint{0pt}{0pt}}
\pgfpathlineto{\pgfpoint{\mcSize+\mcThickness}{\mcSize+\mcThickness}}
\pgfusepath{stroke}
}}
\makeatother

 
\tikzset{
pattern size/.store in=\mcSize, 
pattern size = 5pt,
pattern thickness/.store in=\mcThickness, 
pattern thickness = 0.3pt,
pattern radius/.store in=\mcRadius, 
pattern radius = 1pt}
\makeatletter
\pgfutil@ifundefined{pgf@pattern@name@_orz3y5q86}{
\pgfdeclarepatternformonly[\mcThickness,\mcSize]{_orz3y5q86}
{\pgfqpoint{0pt}{0pt}}
{\pgfpoint{\mcSize+\mcThickness}{\mcSize+\mcThickness}}
{\pgfpoint{\mcSize}{\mcSize}}
{
\pgfsetcolor{\tikz@pattern@color}
\pgfsetlinewidth{\mcThickness}
\pgfpathmoveto{\pgfqpoint{0pt}{0pt}}
\pgfpathlineto{\pgfpoint{\mcSize+\mcThickness}{\mcSize+\mcThickness}}
\pgfusepath{stroke}
}}
\makeatother
\tikzset{every picture/.style={line width=0.75pt}} 

\begin{tikzpicture}[x=0.75pt,y=0.75pt,yscale=-1,xscale=1]

\draw  (64.6,192.61) -- (343.8,192.61)(76.32,30.7) -- (76.32,210.6) (336.8,187.61) -- (343.8,192.61) -- (336.8,197.61) (71.32,37.7) -- (76.32,30.7) -- (81.32,37.7)  ;
\draw  [pattern=_xyjy4qbyy,pattern size=6pt,pattern thickness=0.75pt,pattern radius=0pt, pattern color={rgb, 255:red, 0; green, 0; blue, 0}] (106.9,140.25) -- (119.9,140.25) -- (119.9,192.75) -- (106.9,192.75) -- cycle ;
\draw  [pattern=_v0c7r9mdc,pattern size=6pt,pattern thickness=0.75pt,pattern radius=0pt, pattern color={rgb, 255:red, 0; green, 0; blue, 0}] (209.9,140.25) -- (222.9,140.25) -- (222.9,192.75) -- (209.9,192.75) -- cycle ;
\draw  [pattern=_orz3y5q86,pattern size=6pt,pattern thickness=0.75pt,pattern radius=0pt, pattern color={rgb, 255:red, 0; green, 0; blue, 0}] (306.4,80.75) -- (319.4,80.75) -- (319.4,192.75) -- (306.4,192.75) -- cycle ;
\draw  [dash pattern={on 0.84pt off 2.51pt}]  (77,140.2) -- (209.9,140.25) ;
\draw  [dash pattern={on 0.84pt off 2.51pt}]  (77,80.6) -- (306.4,80.75) ;

\draw (78.32,196.01) node [anchor=north west][inner sep=0.75pt]    {$ \begin{array}{l}
{\textstyle \dim(\mathrm{Im} \, B)}\\
={\textstyle \dim(\mathscr{U})}
\end{array}$};
\draw (175.34,201.34) node [anchor=north west][inner sep=0.75pt]    {\large ${\textstyle \dim(\mathrm{Im} \, C)}$};
\draw (277.34,200.84) node [anchor=north west][inner sep=0.75pt]    {\large ${\textstyle \dim(\mathscr{Y})}$};
\draw (56,136.8) node [anchor=north west][inner sep=0.75pt]    {\large $m$};
\draw (61,76.8) node [anchor=north west][inner sep=0.75pt]    {\large $p$};

\end{tikzpicture}}}
    \subcaptionbox{WOR case}{\scalebox{0.49}{
 
\tikzset{
pattern size/.store in=\mcSize, 
pattern size = 5pt,
pattern thickness/.store in=\mcThickness, 
pattern thickness = 0.3pt,
pattern radius/.store in=\mcRadius, 
pattern radius = 1pt}
\makeatletter
\pgfutil@ifundefined{pgf@pattern@name@_b7x6a1bhh}{
\pgfdeclarepatternformonly[\mcThickness,\mcSize]{_b7x6a1bhh}
{\pgfqpoint{0pt}{0pt}}
{\pgfpoint{\mcSize+\mcThickness}{\mcSize+\mcThickness}}
{\pgfpoint{\mcSize}{\mcSize}}
{
\pgfsetcolor{\tikz@pattern@color}
\pgfsetlinewidth{\mcThickness}
\pgfpathmoveto{\pgfqpoint{0pt}{0pt}}
\pgfpathlineto{\pgfpoint{\mcSize+\mcThickness}{\mcSize+\mcThickness}}
\pgfusepath{stroke}
}}
\makeatother

 
\tikzset{
pattern size/.store in=\mcSize, 
pattern size = 5pt,
pattern thickness/.store in=\mcThickness, 
pattern thickness = 0.3pt,
pattern radius/.store in=\mcRadius, 
pattern radius = 1pt}
\makeatletter
\pgfutil@ifundefined{pgf@pattern@name@_vllqvu7ih}{
\pgfdeclarepatternformonly[\mcThickness,\mcSize]{_vllqvu7ih}
{\pgfqpoint{0pt}{0pt}}
{\pgfpoint{\mcSize+\mcThickness}{\mcSize+\mcThickness}}
{\pgfpoint{\mcSize}{\mcSize}}
{
\pgfsetcolor{\tikz@pattern@color}
\pgfsetlinewidth{\mcThickness}
\pgfpathmoveto{\pgfqpoint{0pt}{0pt}}
\pgfpathlineto{\pgfpoint{\mcSize+\mcThickness}{\mcSize+\mcThickness}}
\pgfusepath{stroke}
}}
\makeatother

 
\tikzset{
pattern size/.store in=\mcSize, 
pattern size = 5pt,
pattern thickness/.store in=\mcThickness, 
pattern thickness = 0.3pt,
pattern radius/.store in=\mcRadius, 
pattern radius = 1pt}
\makeatletter
\pgfutil@ifundefined{pgf@pattern@name@_7m0zh1jjw}{
\pgfdeclarepatternformonly[\mcThickness,\mcSize]{_7m0zh1jjw}
{\pgfqpoint{0pt}{0pt}}
{\pgfpoint{\mcSize+\mcThickness}{\mcSize+\mcThickness}}
{\pgfpoint{\mcSize}{\mcSize}}
{
\pgfsetcolor{\tikz@pattern@color}
\pgfsetlinewidth{\mcThickness}
\pgfpathmoveto{\pgfqpoint{0pt}{0pt}}
\pgfpathlineto{\pgfpoint{\mcSize+\mcThickness}{\mcSize+\mcThickness}}
\pgfusepath{stroke}
}}
\makeatother
\tikzset{every picture/.style={line width=0.75pt}} 

\begin{tikzpicture}[x=0.75pt,y=0.75pt,yscale=-1,xscale=1]

\draw  (64.6,192.61) -- (343.8,192.61)(76.32,30.7) -- (76.32,210.6) (336.8,187.61) -- (343.8,192.61) -- (336.8,197.61) (71.32,37.7) -- (76.32,30.7) -- (81.32,37.7)  ;
\draw  [pattern=_b7x6a1bhh,pattern size=6pt,pattern thickness=0.75pt,pattern radius=0pt, pattern color={rgb, 255:red, 0; green, 0; blue, 0}] (106.9,140.25) -- (119.9,140.25) -- (119.9,192.75) -- (106.9,192.75) -- cycle ;
\draw  [pattern=_vllqvu7ih,pattern size=6pt,pattern thickness=0.75pt,pattern radius=0pt, pattern color={rgb, 255:red, 0; green, 0; blue, 0}] (209.9,80.75) -- (222.9,80.75) -- (222.9,192.75) -- (209.9,192.75) -- cycle ;
\draw  [pattern=_7m0zh1jjw,pattern size=6pt,pattern thickness=0.75pt,pattern radius=0pt, pattern color={rgb, 255:red, 0; green, 0; blue, 0}] (306.4,80.75) -- (319.4,80.75) -- (319.4,192.75) -- (306.4,192.75) -- cycle ;
\draw  [dash pattern={on 0.84pt off 2.51pt}]  (77,140.2) -- (106.9,140.25) ;
\draw  [dash pattern={on 0.84pt off 2.51pt}]  (77,80.6) -- (306.4,80.75) ;

\draw (78.32,196.01) node [anchor=north west][inner sep=0.75pt]    {$ \begin{array}{l}
{\textstyle \dim(\mathrm{Im} \, B)}\\
={\textstyle \dim(\mathscr{U})}
\end{array}$};
\draw (175.34,201.34) node [anchor=north west][inner sep=0.75pt]    {\large ${\textstyle \dim(\mathrm{Im} \, C)}$};
\draw (277.34,200.84) node [anchor=north west][inner sep=0.75pt]    {\large ${\textstyle \dim(\mathscr{Y})}$};
\draw (56,136.8) node [anchor=north west][inner sep=0.75pt]    {\large $m$};
\draw (61,76.8) node [anchor=north west][inner sep=0.75pt]    {\large $p$};

\end{tikzpicture}}}
    \caption{Relation between the dimension of input space, image of output map, and output space for: (a) strong output redundancy (SOR);  (b) weak output redundancy (WOR). }
    \label{fig:in-out_number}
\end{figure}
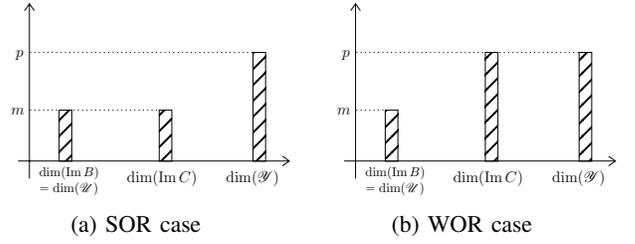
\section{Strong Output Redundancy}\label{sec:SOR}
In this section, we first introduce properties of interest for SOR systems, and then we illustrate the solution to the problem formalized in Section~\ref{sec:p>m} for this specific class.
\subsection{Properties of SOR Systems}\label{sec:SOR:prop}
Let system~\eqref{eq:system} be SOR. According to Definition~\ref{def:SOR}, the map~$C$ satisfies
\begin{equation}\label{eq:SORdim}
    p= \dim (\Y)  > \dim (\im C) = m,
\end{equation}
which implies that $C$ is not epic.
Consequently, it is possible to factor out $\im C$ from $\Y$ by canonical projection. The commutative diagram in Figure~\ref{fig:strong} depicts the factorization of $\Y$ into $\im C$ and $\Y/\im C$. As $\ke \pi = \im C$, the combined map $\pi C: \X\to\Y / \im C$ is the zero map. This in turn implies that no input $u(\cdot) \in \U$ can be retrieved from $\Y / \im C$; this simple observation will prove instrumental to the design of $\D$ solving Problem~\ref{prob:li_d2}.
\subsection{Design of $\D$ for SOR systems}
Owing to the aforementioned properties of SOR systems, the following result provides a necessary and sufficient condition on $\D$ for the solution to Problem~\ref{prob:li_d2}:
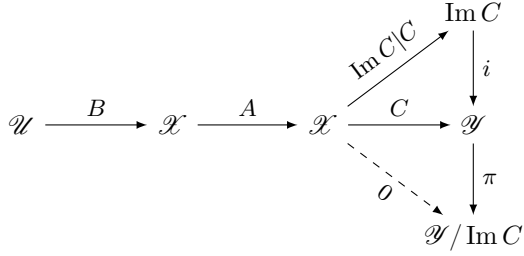
\begin{figure}[t]
    \centering
    \scalebox{1}{\begin{tikzpicture}


\def\xgap{2}
\def\ygap{1.5}
\foreach \i in {0,...,3}{
    \foreach \j in {0,...,3}{
        \coordinate (\j\i) at (\i*\xgap, \j*\ygap);
    }
}
\pgfmathatantwo{\ygap}{\xgap}
\def\labangle{\pgfmathresult}

\tikzset{label/.style = {font=\small, midway}}


\node (U) at (10) {$\mathscr U$};
\node (X) at (11) {$\mathscr X$};
\node (C) at (23) {$\im C$};
\node (X2) at (12) {$\mathscr X$};
\node (Y) at (13) {$\mathscr Y$};
\node (YmodC) at (03) {$\mathscr Y/\im C$};

\draw[-latex] (U) -- (X) node [label, above] {$B$};


\draw[-latex] (C) -- (Y) node [label, right] {$i$};
\draw[-latex] (X2) -- (C) node [label, above, rotate=\labangle] {$\im C|C$};

\draw[-latex] (X) -- (X2) node [label, above] {$A$};
\draw[-latex] (X2) -- (Y) node [label, above] {$C$};

\draw[-latex, dashed] (X2) -- (YmodC) node [label, below, rotate=-\labangle] {$0$};
\draw[-latex] (Y) -- (YmodC) node [label, right] {$\pi$};

\end{tikzpicture}}
    \caption{Commutative diagram for strongly output-redundant systems, where $\im C|C : \X \to \im C$ is the codomain restriction of $C$ to $\im C$, $i : \im C \to \Y$ is the insertion map, and $\pi : \Y \to \Y/\im C$ is the canonical projection $\mathrm{mod} \im{C}$.}
    \label{fig:strong}
\end{figure}
\begin{thm}\label{thm:SOR}
    Assume that system~\eqref{eq:sys} is SOR.
   A non-trivial subspace $\D\subseteq\Y$ solves Problem~\ref{prob:li_d2} if and only if 
    \begin{equation}\label{eq:SOR:iff}
        \D \cap \im C = 0.
    \end{equation}
\end{thm}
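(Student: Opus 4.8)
The plan is to translate ``$\D$ solves Problem~\ref{prob:li_d2}'' into the statement that the aggregate system $\Sigma_a=\{C,A,[B\ \ 0],[0\ \ D]\}$ with input $u_{a}=u\oplus d$ is left-invertible, and then to use the Rosenbrock-pencil characterization of left-invertibility for quadruplets (equivalent, by the references already cited for Definition~\ref{def:li_d}, to the signal definition): $\Sigma_a$ is left-invertible with respect to $u_{a}$ if and only if its system matrix
\[
P_a(s)=\begin{bmatrix} sI-A & -B & 0\\ C & 0 & D\end{bmatrix}
\]
has full column rank $n+m+q$ over the field $\R(s)$. The single algebraic fact that does all the work is the following consequence of Assumption~\ref{as:li} \emph{together with} the SOR hypothesis: the image of the transfer map $s\mapsto C(sI-A)^{-1}B$, taken over all rational input vectors in $\R(s)^m$, equals $\im C$ (regarded as a subspace of $\R(s)^p$). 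Indeed, that image is always contained in $\im C$; it has dimension $m$ over $\R(s)$ because Assumption~\ref{as:li} makes $C(sI-A)^{-1}B$ of full column rank $m$ (in particular $B$ monic); and $\im C$ also has dimension $m$, precisely because the system is SOR (Definition~\ref{def:SOR}). Hence the two subspaces coincide.

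\emph{Sufficiency} ($\D\cap\im C=0\Rightarrow\D$ solves Problem~\ref{prob:li_d2}). I would take an arbitrary $(\zeta,\alpha,\beta)\in\ke P_a(s)$ over $\R(s)$, i.e.\ $(sI-A)\zeta=B\alpha$ and $C\zeta+D\beta=0$, and observe that $C\zeta=-D\beta$ lies both in $\im D=\D$ and in $\im C$, hence in $\D\cap\im C=0$. Therefore $C\zeta=0$ and $D\beta=0$, and monicity of $D$ forces $\beta=0$; what remains, $(sI-A)\zeta=B\alpha$ with $C\zeta=0$, exhibits $(\zeta,\alpha)$ as a rational null vector of the system matrix $\begin{bmatrix} sI-A & -B\\ C & 0\end{bmatrix}$ of $\{C,A,B\}$, which has full column rank over $\R(s)$ by Assumption~\ref{as:li}. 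Thus $\zeta=0$, $\alpha=0$, so $P_a(s)$ has full column rank and $\D$ solves the problem. (A more geometric rendering of the same step, using Figure~\ref{fig:strong}: apply the canonical projection $\pi:\Y\to\Y/\im C$ to the identity $Cx(t)+Dd(t)\equiv0$; since $\pi C=0$ this gives $Dd(t)\in\ke\pi\cap\im D=\im C\cap\D=0$, so $d\equiv0$ by monicity of $D$, and then left-invertibility of $\{C,A,B\}$ forces $u\equiv0$.)

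\emph{Necessity} ($\D\cap\im C\neq0\Rightarrow\D$ does not solve Problem~\ref{prob:li_d2}). Arguing by contraposition, I would pick $0\neq w\in\D\cap\im C$. By the algebraic fact recorded above, $w$ lies in the rational image of $C(sI-A)^{-1}B$, so there is $\alpha(s)\in\R(s)^m$ with $C(sI-A)^{-1}B\,\alpha(s)=w$. Setting $\zeta(s):=(sI-A)^{-1}B\,\alpha(s)$ and $\beta:=-D^{\dagger}w$ (well defined and nonzero, since $D$ is monic and $w\in\im D$), one checks that $(sI-A)\zeta-B\alpha=0$ and $C\zeta+D\beta=w-w=0$, so $(\zeta,\alpha,\beta)$ is a nonzero rational null vector of $P_a(s)$. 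Hence $P_a(s)$ is column-rank deficient over $\R(s)$, $\Sigma_a$ is not left-invertible with respect to $u_{a}$, and $\D$ fails to solve the problem.

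I expect the main obstacle to be isolating and proving that $\im\bigl(C(sI-A)^{-1}B\bigr)=\im C$ over $\R(s)$: this is the only place where ``$\dim\im C=m$'' (as opposed to the WOR alternative $\dim\im C=p$) is genuinely used, and it is exactly what allows a nonzero $w\in\D\cap\im C$ to be reproduced as a forced output response that can then be cancelled by a nonzero disturbance. A minor, purely bookkeeping point is the passage between rational null vectors of $P_a(s)$ and admissible signals $u(\cdot),d(\cdot)$ on $[0,T]$, which is subsumed in the equivalence between Definition~\ref{def:li_d} and the pencil condition. Should one prefer to avoid frequency-domain arguments, the necessity step can instead be phrased as the existence of a nonzero controllability-subspace trajectory of $(A,B)$ lying in $C^{-1}(\D\cap\im C)$ but not in $\ke C$; establishing that such a trajectory exists amounts, however, to the same SOR-specific rank statement in disguise.
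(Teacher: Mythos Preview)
Your proposal is correct and takes a genuinely different route from the paper. The paper argues entirely in the geometric/time-domain setting: for sufficiency it decomposes $\Y=\im C\oplus\W$ with $\D\subseteq\W$, shows that the projection $\tilde Q$ on $\im C$ along $\W$ satisfies $\ke(\tilde Q C)=\ke C$ (so $\{\tilde Q C,A,B\}$ inherits left-invertibility from $\{C,A,B\}$) and that the complementary projection restricted to $\D$ is monic; for necessity it restricts the output to $\im C$ and the disturbance to $\D\cap\im C$, obtaining an aggregate input of dimension $m+\dim(\D\cap\im C)>m=\dim\im C$, which is a dimension obstruction to left-invertibility. You instead work with the Rosenbrock pencil $P_a(s)$ over $\R(s)$: sufficiency is a direct kernel computation using $\D\cap\im C=0$ (extended to $\R(s)$), while necessity hinges on the neat identity $\im\bigl(C(sI-A)^{-1}B\bigr)=(\im C)\otimes\R(s)$, which you correctly derive from $\mathrm{rank}\,C(sI-A)^{-1}B=m=\dim\im C$ under Assumption~\ref{as:li} and SOR. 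The paper's argument is coordinate-free, stays within the geometric framework used elsewhere in the article, and immediately suggests the defender's projection-based reconstruction scheme; your argument is shorter, isolates in a single rank identity the exact place where ``$\dim\im C=m$'' is needed, and makes transparent that the sufficiency direction does not use SOR at all. The only point worth tightening is the tacit use that $\V\cap\W=0$ over $\R$ implies the same over $\R(s)$ after scalar extension; this is standard, but since the rest of the paper avoids frequency-domain reasoning it may deserve a one-line justification.
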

\smallskip
\begin{proof}
Owing to \eqref{eq:SORdim}, $\Y$ can be decomposed as 
\begin{equation}\label{eq:W}
\Y = \im C\, \oplus\, \W
\end{equation}
 where $\W \simeq \Y / \im C$ is an arbitrary complementary subspace to $\im C$.  Let $Q:\Y\to\Y$ be the projection on $\im C$ along $\W$, and $R:=I-Q$ be the projection on $\W$ along $\im C$.  Also, let $\tilde{Q}:=\im C|Q$ and $\tilde{R}=\W|R$ denote the corresponding natural projections. An example of the ensuing commutative diagrams is shown in Figure~\ref{fig:thm.SOR}. Accordingly, $y_{d}\in\Y$ is expressed uniquely as
\begin{equation}\label{eq:decomp.yd}
y_{d} = \tilde{Q}y_{d} \oplus \tilde{R}y_{d} = (\tilde{C}x+\tilde{D}_{1}d) \oplus \tilde{D}_{2}d
\end{equation}
where $\tilde{C}:=\tilde{Q}C$, $\tilde{D}_{1}:=\tilde{Q}D$, and $\tilde{D}_{2}:=\tilde{R}D$ (note that $\tilde{R}C=0$). We prove sufficiency by construction.  
Select a non-trivial subspace $\D \subseteq \Y$ satisfying~\eqref{eq:SOR:iff} and let the subspace $\W$ in~\eqref{eq:W} satisfy $\D\subseteq\W$.  Since~\eqref{eq:SOR:iff} implies $\tilde{D}_{1}=0$, from~\eqref{eq:decomp.yd} one obtains
\begin{equation}\label{eq:yd1}
y_{d} = \tilde{C}x \oplus \tilde{D}_{2}d.
\end{equation}
We will show that the system defined by the triplet $\{\tilde{C},A,B\}$ is left-invertible, and that the map $\tilde{D}_{2}$ is monic, which proves that~\eqref{eq:sys} is left-invertible with respect to $u_{a}$. By Assumption~\ref{as:li}, it follows that $B$ is monic and $\im{B}\cap\V^{*}(A,B;\ke{C})=0$ (see~\cite{morse1971status}). Consequently, to show that  $\{\tilde{C},A,B\}$ is left-invertible it suffices to prove that  $\im{B}\cap\V^{*}(A,B;\ke{\tilde{C}})=0$. To this end, note that
\[
\ke{\tilde{C}} = C^{-1}\big(\ke{\tilde{Q}}\cap\im{C}\big)=C^{-1}\left(\W\cap\im{C}\right)=\ke{C}
\]
hence $\V^{*}(A,B;\ke{\tilde{C}})=\V^{*}(A,B;\ke{C})$, from which left-invertibility of $\{\tilde{C},A,B\}$  follows. The assertion that $\tilde{D}_{2}$ is monic is proved in a similar fashion by noticing that  
\[
\ke{\tilde{D}_{2}}=D^{-1}(\ke{\tilde{R}}\cap\D)=D^{-1}(\im{C}\cap\D)=\ke{D}=0.
\]
Necessity is proven by contradiction. Assume that \eqref{eq:SOR:iff} does not hold and system~\eqref{eq:sys} is left-invertible with respect to $u_{a}$. Let $(u(\cdot),d(\cdot))\in\mathbb{U}\times \mathbb{D}$ be such that $\theta_{0}(u,d)=\theta_{0}(0,0)$. This implies $\tilde{D}_{2}d(t)=0$ for all $t\in[0,T]$, hence $d(t)\in D^{-1}\ke{\tilde{R}}\implies d(t)\in D^{-1}\im{C}\implies d(t)\in\D\cap\im{C}$ (seen as a subspace of $\D$) for all $t\in[0,T]$. Consequently, one can restrict the co-domain of $d(\cdot)$ to $\D\cap\im{C}$ whenever $\theta_{0}(u,d)=\theta_{0}(0,0)$. Accordingly, let $\tilde{D}_{3}:=\tilde{D}_{1}|(\D\cap\im{C})$ denote the domain restriction of $\tilde{D}_{1}=\tilde{Q}D$ to $\D\cap\im{C}$. Clearly, $\tilde{D}_{3}\neq 0$. By inspection, system~\eqref{eq:sys} is left-invertible with respect to $u_{a}\in\U\oplus\D$ only if the system
    \begin{align}\label{eq:sys.tilde}
        x^+ &= Ax + Bu, \quad x(0)=x_{0} \nonumber \\
        \tilde{y}_d &= \tilde{C}x + \tilde{D}_{3}\tilde{d},
    \end{align}
with output $\tilde{y}_{p}\in\im{C}$ and output disturbance $\tilde{d}\in\D\cap\im{C}$, is left-invertible with respect to the aggregate input $\tilde{u}_{a}=(u,\tilde{d})\in\U\oplus(\D\cap\im{C})$. This is an impossibility, given that 
\[
\dim \big(\U\oplus(\D\cap\im{C})\big) = m +\dim (\D\cap\im{C}) > m = \im{C}.
\]
\end{proof}
%
%
\begin{rem}[Maximum Coverage]
 The \textit{maximum coverage of $\Y$ from $\D$}, i.e., the subspace with largest dimension solving Problem~\ref{prob:li_d2}, is achieved when $\D \simeq \Y /\im C$.
\end{rem}

It is worth noticing that the necessary and sufficient condition given in Theorem~\ref{thm:SOR} relies solely on the definition of $\Y$ and $\im C$, and is not related to the mappings from the input space $\U$ to the state space $\X$.  As a result, for SOR systems,  the dynamics of~\eqref{eq:sys} plays no role in the solution of Problem~\ref{prob:li_d2}.\smallskip
\subsubsection*{Strategy for reconstructing the input for SOR systems}
The proof of Theorem~\ref{thm:SOR} suggests a simple strategy to reconstruct $u(\cdot)$ from the perturbed output $y_{d}(\cdot)$. Once the legitimate user has knowledge of the complementary subspace $\W$, and $\D\subseteq \W$, the effect of the disturbance can be removed from the output altogether by projection on $\im{C}$ along $\W$. This allows reconstruction of~$u(\cdot)$ via left-inversion of system~\eqref{eq:system}. The ensuing scheme is depicted in Figure~\ref{fig:scheme.SOR}.
\begin{figure}[t]
    \centering
    \scalebox{1}{\begin{tikzpicture}


\def\xgap{2}
\def\ygap{2.25}
\foreach \i in {0,...,4}{
    \foreach \j in {0,...,3}{
        \coordinate (\j\i) at (\i*\xgap, \j*\ygap);
    }
}
\pgfmathatantwo{\ygap}{\xgap}
\def\labangle{\pgfmathresult}

\tikzset{label/.style = {font=\small, midway}}


\node (U) at (00) {$\mathscr U$};
\node (X) at (01) {$\mathscr X$};
\node (X2) at (02) {$\mathscr X$};
\node (Y) at (03) {$\mathscr Y$};
\node (Y2) at (04) {$\mathscr Y$};
\node (ImC) at (14) {$\im C$};

\draw[-latex] (U) -- (X) node [label, above] {$B$};
\draw[-latex] (X) -- (X2) node [label, above] {$A$};
\draw[-latex] (X2) -- (Y) node [label, above] {$C$};
\draw[-latex] (Y) -- (Y2) node [label, above] {$Q$};

\draw[-latex] (X2) -- (ImC) node [label, above, rotate=0.65*\labangle] {$\tilde C$};
\draw[-latex] (Y) -- (ImC) node [label, below, rotate=1.2*\labangle] {$\tilde Q$};
\draw[-latex] (ImC) -- (Y2) node [label, left] {$i$};






\end{tikzpicture}}
    \caption{Example of a commutative diagram for system~\eqref{eq:sys} as discussed in the proof of Theorem~\ref{thm:SOR}.}
    \label{fig:thm.SOR}
\end{figure}
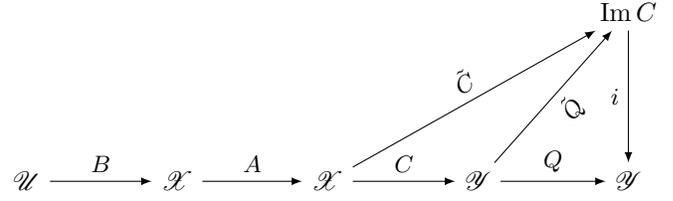
%

%
%
\begin{figure}[ht]
\centering
\includegraphics[width=0.9\linewidth]{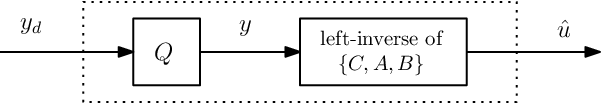}
       \caption{Strategy to reconstruct $u$ for SOR systems.}
        \label{fig:scheme.SOR}
\end{figure}
\subsection{Attacker's sensitivity to disturbance}
Recalling our motivating example, let us provide some insight into the sensitivity of an eavesdropper attack to the injected disturbance.
To this end, one must focus on a specific strategy to be implemented by the attacker. This is accomplished by defining an \textit{attacker model}, that is, the combination of system knowledge, resources, and objectives the attacker has at its disposal~\cite{teixeira2015secure}.
\begin{assum}[Attack model]
    \label{def:atk}
    Assume that  the attacker:
    \begin{enumerate}
        \item Knows the maps $(C,A,B)$ of system~\eqref{eq:system} {\em (system knowledge)};
        \item Has access to the corrupted output signal ${y}_d$ {\em (attacker's resources)};
        \item Attempts to reconstruct the input $u$ from observations of $y_d$ using, as a guess for $\D$, a subspace $\D_{a}$ providing maximum coverage (i.e., satisfying $\im C\oplus\D_{a}=\Y$), with associated projection $Q_a$ on $\im C$ along $\D_a$   {\em (attacker's objective)}.
    \end{enumerate}
\end{assum}
The following result follows immediately from~Theorem~\ref{thm:SOR}:
\begin{cor}\label{thm:SOR:nec}
    Assume that system~\eqref{eq:sys} is SOR. Suppose there is an eavesdropper attacker as in Assumption~\ref{def:atk}.
    The disturbance injection $d$ biases the attacker's reconstructed input only if 
    \begin{equation}\label{eq:DinDa}
        \D \not \subseteq \D_a.
    \end{equation}
\end{cor}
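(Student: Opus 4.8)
The plan is to show the contrapositive: if $\D\subseteq\D_a$, then the attacker's reconstruction is unbiased, i.e. it recovers $u(\cdot)$ exactly. First I would recall the construction in the proof of Theorem~\ref{thm:SOR}. The attacker, with guess $\D_a$ satisfying $\im C\oplus\D_a=\Y$, forms the projection $Q_a:\Y\to\Y$ onto $\im C$ along $\D_a$, applies it to the received signal $y_d=Cx+Dd$, and then left-inverts the triplet $\{\tilde C_a,A,B\}$ where $\tilde C_a:=(\im C|Q_a)C$. By the same argument as in Theorem~\ref{thm:SOR}, $\ke\tilde C_a=\ke C$, so $\{\tilde C_a,A,B\}$ is left-invertible; the only question is whether the term coming from $Dd$ survives the projection $Q_a$.

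The key computation is that $Q_a(Dd)=0$ whenever $\D\subseteq\D_a$. Indeed, $D$ is the insertion map of $\D$ into $\Y$ (recall the paper identifies $\D$ with a subspace of $\Y$ after restricting $D$ to be monic), so $\im D=\D\subseteq\D_a=\ke Q_a$, whence $Q_a D=0$. Therefore $Q_a y_d=Q_a Cx=\tilde C_a x$ with $\tilde C_a$ chosen above, i.e. the attacker faces exactly the nominal left-invertible system $\{\tilde C_a,A,B\}$ with no residual disturbance, and the reconstructed input $\hat u(\cdot)$ coincides with $u(\cdot)$ for every admissible $u,d$ and every $x_0$. Hence the disturbance does not bias the attacker's estimate. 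Taking the contrapositive yields the stated necessary condition~\eqref{eq:DinDa}.

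I would present this compactly, citing Theorem~\ref{thm:SOR} for the left-invertibility of $\{\tilde C_a,A,B\}$ (the identity $\ke\tilde C_a=\ke C$ is literally the same line as in that proof, with $\W$ replaced by $\D_a$) so that the only new content is the one-line observation $\im D\subseteq\D_a\Rightarrow Q_aD=0$. The main (mild) obstacle is bookkeeping rather than substance: one must be careful that ``biases the attacker's reconstructed input'' is interpreted as ``the attacker's left-inversion of $Q_a y_d$ fails to return $u$,'' and that when $Q_aD=0$ the attacker's problem reduces verbatim to reconstructing $u$ from the output of a left-invertible system, so no bias can occur regardless of $d$. No delicate estimates or genericity arguments are needed; the result is essentially a corollary obtained by reusing the sufficiency construction of Theorem~\ref{thm:SOR} with the attacker's $\D_a$ in place of the defender's $\W$.
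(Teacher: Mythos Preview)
Your proposal is correct and follows essentially the same contrapositive argument as the paper: assume $\D\subseteq\D_a$, use $\ke Q_a=\D_a$ to conclude $Q_aD=0$, and deduce that the attacker recovers the clean output and hence $u$. The only difference is a minor detour: since $Q_a$ projects onto $\im C$, one has $Q_aC=C$ directly, so $Q_ay_d=Cx=y$ and the standing left-invertibility assumption on $\{C,A,B\}$ suffices; there is no need to introduce $\tilde C_a$ or re-establish left-invertibility of $\{\tilde C_a,A,B\}$.
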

\begin{proof}
  The proof follows easily by contradiction.  Suppose \eqref{eq:DinDa} does not hold, i.e., that $\D \subseteq \D_a$.
    Since $\ke Q_a = \D_a$, it follows that $Q_a Dd = 0$, hence  $y_a \coloneqq Q_a y_d=y$. As system~\eqref{eq:sys} is left-invertible when $d=0$, the attacker can reconstruct the input $u$ without any bias.
\end{proof}

\begin{rem}
    One possible solution to achieving the attacker's objective, and possibly the most intuitive to use in a discrete time setting, is to exploit a least squares (LS) algorithm to compute an estimate of $u$, see, e.g., \cite{sain1969invertibility, valcher1999state, ansari2019deadbeat}. Indeed, for left-invertible systems (without the presence of an output disturbance), it has been shown that the input can be reconstructed exactly with some delay, depending on the relative degree of the system~\cite{ansari2019deadbeat}.
    For that solution strategy, the definition of the LS algorithm implies 
    \begin{equation}
        \D_a \coloneqq \ke C^{\prime}
    \end{equation}
\end{rem}
where $C^{\prime}:\Y^{\prime}\to\X^{\prime}$ is the dual map of $C$ and $\Y^{\prime},\X^{\prime}$ are dual spaces to $\Y,\X$, respectively.
%
%
In Section~\ref{sec:SOR:ex} we show an example of an SOR system, showcasing how satisfying \eqref{eq:DinDa} implies $Q_a D \neq 0$, although $Q_a C = C$.
%
%
\subsection{Illustrating example for SOR systems}\label{sec:SOR:ex}
We present a simple illustrating example concerning strongly output-redundant systems.
Consider system~\eqref{eq:sys} with $n=4$, $m = 2$, $p=3$ and map $C$ represented in given bases for $\X$ and $\Y$ as
\begin{equation*}
    \mathtt C \coloneqq \mathrm{Mat}(C) = \begin{bmatrix}
        1 & 1 & 0 & 0 \\
        1 & 0 & 0 & 0 \\
        2 & 1 & 0 & 0
    \end{bmatrix}.
\end{equation*}
Since $p=3 > \rank C =  m = 2$, the system is SOR.
The matrix representations of the maps $\im C|C$ and $i$ in the diagram of Figure~\ref{fig:strong} read as
\begin{align*}
   & \mathrm{Mat}(\im C|C) = \! \begin{bmatrix}
        1 & 0 & 0 & 0 \\
        0 & 1 & 0 & 0
    \end{bmatrix},\
    \mathrm{Mat}(i) = \! \begin{bmatrix}
        1 & 1 \\ 
        1 & 0 \\
        2 & 1
    \end{bmatrix}.
\end{align*}
As $p-\rank{C}=1$, the only possible nonzero subspace $\D$ satisfying~\eqref{eq:SOR:iff} is a complementary subspace to $\im{C}$ in $\Y$. As an arbitrary choice (among infinitely many) for $\D$, select the subspace with insertion map $D$ with coordinate representation
\begin{equation} \label{eq:W:SOR:eg}
    \mathtt D := \mathrm{Mat}(D) = \begin{bmatrix}
        0 & 1 & 0
    \end{bmatrix}^\top.
\end{equation}
Choose as basis of $\Y/\im{C}$ the cosets whose representatives are the elements of the basis of $\im{C}$. This yields the following representation for the canonical projection $\pi:\Y\to\Y/\im{C}$ in~Figure~\ref{fig:strong}:
\[
   \mathrm{Mat}(\pi) = \! \begin{bmatrix}
       1 & 1 & -1 
   \end{bmatrix}
\]
It can be easily verified that the projection $Q:\Y\to\Y$ on $\im{C}$ along $\D$ has representation
\begin{equation*}
   \mathtt Q = \begin{bmatrix}
        1 & 0 & 0 \\
        -1 & 0 & 1 \\
        0 & 0 & 1 
    \end{bmatrix}
\end{equation*}
yielding $\mathtt{QC = C}$ and $\mathtt{QD} = 0$, hence $\mathtt Q\bar y = \mathtt Cx$; this implies that Problem~\ref{prob:li_d2} can be solved.
Assuming that the attacker employs a least-squares algorithm, one obtains 
\[
\D_a= \ke {C}^\prime \simeq \ke \texttt{C}^\top = \mathrm{span}\left\{\begin{bmatrix}
    -1 & -1 & 1
\end{bmatrix}^\top\right\}
\] 
and, for the projection $Q_{a}:\Y\to\Y$ on $\im{C}$ along $\D_{a}$ (apart from round-off errors)
\[
\mathtt{Q}_{a}=
\begin{bmatrix}
   0.667  & -0.333  &  0.333 \\
  -0.333  &  0.667  &  0.333 \\
   0.333  &  0.333  &  0.667
 \end{bmatrix}
\]
Consequently, $\mathtt{Q}_{a}\mathtt{D}\neq0$, leading to a bias of the attacker's estimate.
%

%
%
%
\section{Weak Output Redundancy}\label{sec:WOR}
In this section, we first introduce the properties of WOR systems and develop an appropriate defense strategy against the malicious eavesdropper for this class of systems.
\subsection{Properties of WOR systems}\label{subsec:WOR-prop}
Assume that system~\eqref{eq:system} is WOR. Let $\Ss^*$ denote the infimal unobservability subspace containing $\im B$, i.e., $\Ss^* \coloneqq \Ss^*(C,A;\im B)$. The following (fundamental) result has been proved in~\cite{yang2022joint}:
\begin{lem}[\!\!\cite{yang2022joint}]\label{lem:dimSs*}
 For a WOR system,  $\Ss^*$ is non-trivial, and satisfies $m\leq \dim(\Ss^*) \leq n-p+m$.
\end{lem}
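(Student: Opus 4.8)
The plan is to establish the two bounds on $\dim(\Ss^*)$ separately, using the standard machinery of unobservability subspaces together with the hypotheses of left-invertibility and weak output redundancy.

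\textbf{Lower bound.} First I would recall that $\Ss^* = \Ss^*(C,A;\im B) \supseteq \im B$ by definition, since it is the infimal unobservability subspace \emph{containing} $\im B$. By Assumption~\ref{as:li} (left-invertibility), the map $B$ is monic, hence $\dim(\im B) = m$. Consequently $\dim(\Ss^*) \geq \dim(\im B) = m$. This also immediately yields that $\Ss^*$ is non-trivial, since $m \geq 1$. So the left inequality is essentially free.

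\textbf{Upper bound.} This is where the real work lies, and I expect it to be the main obstacle. The idea is to relate $\Ss^*$ to $\V^* := \V^*(A,B;\ke C)$, the supremal $(A,B)$-invariant subspace contained in $\ke C$, and exploit the characterization of left-invertibility. By Assumption~\ref{as:li}, left-invertibility of $(C,A,B)$ is equivalent to $\im B \cap \V^* = 0$ together with $B$ monic (as already invoked in the proof of Theorem~\ref{thm:SOR}). There is a known relationship, for left-invertible systems, between $\Ss^*(C,A;\im B)$ and the reachability/controllability subspace structure; in particular one expects a formula of the form $\Ss^* = \V^* \oplus \Rs$ or $\Ss^*= \langle A + LC \mid \cdots\rangle$ built from $\im B$ modulo the unobservable directions. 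The cleanest route is probably: since the system is WOR, $C$ is epic ($\dim(\im C) = p$), so $\ke C$ has dimension $n - p$; then $\V^* \subseteq \ke C$ gives $\dim(\V^*) \leq n - p$. Using the algorithmic characterization of $\Ss^*$ (the dual of the $\V^*$-algorithm, e.g. $\Ss^* = \V^* + \langle A \mid \ke C \rangle$-type constructions, or more precisely $\Ss^*(C,A;\im B)$ being generated from $\im B$ by the recursion $\Ss_{k+1} = \im B + A(\Ss_k \cap \ke C)$ closed under an appropriate output injection), one shows $\Ss^* \subseteq \im B + \V^*$ — or, accounting for left-invertibility which forces $\im B \cap \V^* = 0$, that $\Ss^* = \im B \oplus (\text{something inside }\ke C \text{ of dimension} \le n-p)$. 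Combining, $\dim(\Ss^*) \leq m + (n - p) = n - p + m$.

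The delicate point — the hard part — is justifying the precise decomposition $\Ss^* = \im B \oplus (\Ss^* \cap \ke C)$ (or the inclusion $\Ss^* \subseteq \im B + \V^*$) and checking that the $\ke C$-component has dimension at most $n-p$; this requires invoking that left-invertibility makes the system's ``invariant zero'' structure coincide with $\V^*$, and that the unobservability subspace generated by $\im B$ adds no extra state directions beyond $\im B$ and $\V^*$. I would lean on the cited reference~\cite{yang2022joint} for the exact lemma if a self-contained argument proves too involved, but the skeleton above — monicity of $B$ for the lower bound, and $\dim(\ke C) = n - p$ under WOR combined with the $\im B \cap \V^* = 0$ consequence of left-invertibility for the upper bound — is the intended path.
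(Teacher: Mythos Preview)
The paper does not prove this lemma itself; it is imported from~\cite{yang2022joint} without argument, so there is no in-paper proof to compare against. Your proposal must therefore be judged on its own merits.

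Your lower bound is correct and complete: $\Ss^{*}\supseteq\im B$ by definition, $B$ is monic by Assumption~\ref{as:li}, hence $\dim\Ss^{*}\geq m\geq 1$ and $\Ss^{*}$ is non-trivial.

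Your upper bound, however, has a genuine gap. Both decompositions you propose --- $\Ss^{*} = \im B \oplus (\Ss^{*}\cap\ke C)$ and $\Ss^{*}\subseteq \im B + \V^{*}$ --- are false in general. Take the WOR, left-invertible, minimal triple with $n=3$, $m=1$, $p=2$ given by
\[
\mathtt{A}=\begin{bmatrix}0&1&0\\0&0&1\\0&0&0\end{bmatrix},\qquad \mathtt{B}=\begin{bmatrix}0\\0\\1\end{bmatrix},\qquad \mathtt{C}=\begin{bmatrix}1&0&0\\0&1&0\end{bmatrix}.
\]
Here $\ke C=\im B=\mathrm{span}\{e_{3}\}$, $\V^{*}(A,B;\ke C)=0$, and a direct computation gives $\Ss^{*}=\mathrm{span}\{e_{2},e_{3}\}$. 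Then $\im B+\V^{*}=\mathrm{span}\{e_{3}\}\subsetneq\Ss^{*}$, and $\im B\cap(\Ss^{*}\cap\ke C)=\mathrm{span}\{e_{3}\}\neq 0$, so neither relation holds. The failure mode is generic: whenever $CB$ is not injective (relative degree exceeding one), $\im B$ sits inside $\ke C$ and the $(C,A)$-invariance recursion pushes $\Ss^{*}$ strictly beyond $\im B+\V^{*}$.

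A route that does work applies rank--nullity to $C|\Ss^{*}$:
\[
\dim\Ss^{*}=\dim(\Ss^{*}\cap\ke C)+\dim(C\Ss^{*}).
\]
The first summand is at most $\dim\ke C=n-p$ since $C$ is epic under WOR. The second summand equals $m$; this is exactly the ``square'' assertion of Lemma~\ref{lem:square-invert}, also imported from~\cite{yang2022joint}, and it is where left-invertibility genuinely enters. Your ingredients (epic $C$, $\dim\ke C=n-p$) are the right ones, but the combination must pass through $\dim(C\Ss^{*})=m$ rather than through a direct-sum splitting of $\Ss^{*}$ against $\im B$.
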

\smallskip
Since the class of friends of $\Ss^{*}$ is non empty for WOR systems, select $L \in \textbf{L}(\Ss^*)$ and consider a classical Luenberger observer for \eqref{eq:system}
\begin{equation} \label{eq:Luen}
    \hat{x}^+ = (A+LC) \hat x - Ly ,
\end{equation}
where $L: \Y \to \X$ is the output-injection map. Define the state and output observation errors as 
\begin{equation*}
    \tilde x \coloneqq x - \hat x , \qquad
    \tilde y \coloneqq y - C\hat x
\end{equation*}
\begin{figure}[t]
    \centering
    \scalebox{1}{\begin{tikzpicture}


\def\xgap{2}
\def\ygap{1.5}
\foreach \i in {0,...,3}{
    \foreach \j in {0,...,3}{
        \coordinate (\j\i) at (\i*\xgap, \j*\ygap);
    }
}
\pgfmathatantwo{\ygap}{\xgap}
\def\labangle{\pgfmathresult}

\tikzset{label/.style = {font=\small, midway}}


\node (U) at (10) {$\mathscr U$};
\node (XmodS) at (01) {$\bar{\mathscr X}$};
\node (X) at (11) {$\mathscr X$};
\node (S) at (21) {$\mathscr S^*$};
\node (S2) at (22) {$\mathscr S^*$};
\node (Z) at (23) {$C\mathscr S^*$};
\node (X2) at (12) {$\mathscr X$};
\node (Y) at (13) {$\mathscr Y$};
\node (XmodS2) at (02) {$\bar{\mathscr X}$};
\node (YmodZ) at (03) {$\bar{\mathscr Y}$};

\draw[-latex] (U) -- (S) node [label, above, rotate=\labangle] {$B^\flat$};
\draw[-latex] (S) -- (X) node [label, right] {$S$};
\draw[-latex] (U) -- (X) node [label, above] {$B$};
\draw[-latex] (X) -- (XmodS) node [label, right] {$P$};
\draw[-latex, dashed] (U) -- (XmodS) node [label, below, rotate=-\labangle] {$0$};

\draw[-latex] (S2) -- (Z) node [label, above] {$C^\flat$};

\draw[-latex] (S) -- (S2) node [label, above] {$A_L^\flat$};
\draw[-latex] (S2) -- (X2) node [label, right] {$S$};
\draw[-latex] (S2) -- (Y) node [label, above, rotate=-\labangle] {$C^\flat_Y$};
\draw[-latex] (Z) -- (Y) node [label, right] {$S_Y$};

\draw[-latex] (X) -- (X2) node [label, above] {$A_L$};
\draw[-latex] (X2) -- (Y) node [label, above] {$C$};

\draw[-latex] (X2) -- (XmodS2) node [label, right] {$P$};
\draw[-latex] (X2) -- (YmodZ) node [label, above, rotate=-\labangle] {$\bar C_Y$};
\draw[-latex] (Y) -- (YmodZ) node [label, right] {$P_Y$};

\draw[-latex] (XmodS) -- (XmodS2) node [label, above] {$\bar A_L$};
\draw[-latex] (XmodS2) -- (YmodZ) node [label, above] {$\bar{C}$};
\end{tikzpicture}}
    \caption{Commutative diagram for weakly output-redundant systems. In the diagram, $\bar{\X}\coloneqq \X/\Ss^*$ and $\bar{\Y}\coloneqq \Y/(C\Ss^*)$. $S:\Ss^{*}\to \X$ is the insertion map of $\Ss^{*}$ in $\X$, $S_Y: C\Ss^* \to \Y$ is the insertion map of $C\Ss^{*}$ in $\Y$, $P:\X\to\bar{\X}$ is the canonical projection modulo $\Ss^{*}$, and $P_{Y}:\Y\to\bar \Y$ is the canonical projection modulo $C\Ss^*$. 
    The codomain restriction $B^\flat:\U\to\Ss^{*}$ is well-defined since $\im B\subseteq\Ss^{*}$, and the map $\bar{C}:\bar \X \to \bar \Y$ is well-defined due to the fact that $\ke{\bar{C}_Y}=\ke{P_{Y}C}=\Ss^{*}+\ke C \supseteq \Ss^*$. The map $A_{L}^{\flat}:\Ss^{*}\to\Ss^{*}$ is the restriction of $A_{L}$ to the $A_{L}$-invariant subspace $\Ss^{*}$. The map $\bar{A}_{L}:\bar{\X}\to\bar{\X}$ is the map induced on $\bar{\X}$ by $A_{L}$.}
    \label{fig:weak}
\end{figure}
with associated {\em error system} 
\begin{equation}\label{eq:system-diff}
\begin{aligned}
    \tilde x^+ &= A_{L} \tilde x + Bu \\
    \tilde y &= C \tilde x.
\end{aligned}
\end{equation}
where $A_{L}:=A+LC$. {\em We regard \eqref{eq:system-diff} as a new system altogether, obtained from~\eqref{eq:system} via an output-injection transformation using a friend of $\Ss^{*}$.}
The maps defining system~\eqref{eq:system-diff} are then represented according to the commutative diagram in Figure~\ref{fig:weak}, from which one can define two distinct systems:
\begin{itemize}\addtolength{\parskip}{1mm}
    \item System $\Sigma^\flat \coloneqq \{C^\flat, A_L^\flat,B^\flat \}$ has input space $\U$, state space $\Ss^*$ and output space $C\Ss^{*}$.
    \item System $\bar \Sigma \coloneqq \{\bar C, \bar A_L,0\}$ is autonomous with respect to the input space $\U$, and has state space $\bar{\X}=\X/\Ss^{*}$ and output space~$\bar{\Y}=\Y/(C\Ss^{*})$.
\end{itemize}
Relevant properties of the systems $\Sigma^\flat$ and $\bar \Sigma$ are presented in the following lemmas:
\begin{lem}[\!\!\cite{massoumnia1986geometric}{ Proposition~13, Theorem~16}]\label{lem:both-obsv}
    Let system~\eqref{eq:system} be WOR. Assume that $(C,A)$ is observable. Then, the pairs $(C^\flat,A_L^\flat)$ and $(\bar C,\bar A_L)$ are observable for any $L \in \mathbf{L}(\Ss^*)$. 
\end{lem}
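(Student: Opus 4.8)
The statement to prove is Lemma~\ref{lem:both-obsv}: for a WOR system with $(C,A)$ observable, the pairs $(C^\flat,A_L^\flat)$ and $(\bar C,\bar A_L)$ are observable for every $L\in\mathbf{L}(\Ss^*)$. Since the lemma is explicitly attributed to \cite{massoumnia1986geometric} (Proposition~13 and Theorem~16 therein), the ``proof'' is really a matter of invoking and specializing those results to the decomposition induced by $\Ss^*$ as displayed in Figure~\ref{fig:weak}. The key conceptual point is that $\Ss^*$ is an unobservability subspace, and unobservability subspaces are precisely those $(C,A)$-invariant subspaces that admit a friend $L$ such that, in the induced decomposition $\X = \Ss^*\oplus(\X/\Ss^*)$, the two ``diagonal blocks'' $A_L^\flat$ and $\bar A_L$ inherit observability from the observability of $(C,A)$.

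\medskip

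\textbf{Step 1: Observability of $(\bar C,\bar A_L)$.} Recall from Section~\ref{subsec:prelim} that $\Ss^*$, being an unobservability subspace, can be written as $\Ss^* = \ang{\ke HC\,|\,A+LC}$ for some $L$ and some measurement-mixing map $H$; equivalently (see \cite[Section~2.3]{massoumnia1986geometric}), $\Ss^*$ is the largest $A_L$-invariant subspace contained in $\ke(P_Y C)$ — indeed, as noted in the caption of Figure~\ref{fig:weak}, $\ke\bar C_Y = \ke(P_Y C) = \Ss^* + \ke C$. On the factor space $\bar\X = \X/\Ss^*$ the induced pair is $(\bar C,\bar A_L)$ with $\bar C P = \bar C_Y$, $\bar A_L P = P A_L$. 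Its unobservable subspace is $\ang{\ke\bar C\,|\,\bar A_L}$, and pulling this back through $P$ gives the largest $A_L$-invariant subspace contained in $P^{-1}\ke\bar C = \ke(P_Y C)$ that contains $\Ss^*$. By the defining property of $\Ss^*$ as an unobservability subspace (with $L$ a friend), this largest subspace is $\Ss^*$ itself, so $\ang{\ke\bar C\,|\,\bar A_L} = P\Ss^* = 0$, i.e.\ $(\bar C,\bar A_L)$ is observable. This is the content of \cite[Proposition~13]{massoumnia1986geometric}, and I would present it by citing that proposition after setting up the pullback argument in one or two lines.

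\medskip

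\textbf{Step 2: Observability of $(C^\flat,A_L^\flat)$.} Here $\Ss^*$ is the state space, $A_L^\flat = A_L|\Ss^*$ is the restriction (well-defined since $\Ss^*$ is $A_L$-invariant), and $C^\flat = C\Ss^*|C|\Ss^*$ is the codomain restriction of $C|\Ss^*$ to $C\Ss^*$. The unobservable subspace of this pair is $\ang{\ke C^\flat\,|\,A_L^\flat}$. Since $S$ is monic, $\ke C^\flat = S^{-1}(\ke C\cap\Ss^*) = \ke C\cap\Ss^*$ viewed inside $\Ss^*$; the largest $A_L^\flat$-invariant subspace contained in it is the largest $A_L$-invariant subspace contained in $\ke C\cap\Ss^*$, hence the largest $A_L$-invariant subspace contained in $\ke C$ that also lies in $\Ss^*$. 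But the observability of $(C,A)$ is equivalent to observability of $(C,A_L)$ (output injection does not change observability), i.e.\ the largest $A_L$-invariant subspace contained in $\ke C$ is $0$. Therefore $\ang{\ke C^\flat\,|\,A_L^\flat} = 0$ and $(C^\flat,A_L^\flat)$ is observable. This matches \cite[Theorem~16]{massoumnia1986geometric}; again I would keep the argument short and cite the theorem.

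\medskip

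\textbf{Main obstacle.} There is no serious analytical difficulty — both pairs are built from $(C,A)$ by operations (restriction to an invariant subspace, passage to a factor space, output injection) that are well known to preserve or refine observability, and the heavy lifting is done in \cite{massoumnia1986geometric}. The only care needed is bookkeeping: correctly identifying the kernels $\ke C^\flat$ and $\ke\bar C$ in terms of the insertion/projection maps $S$, $P$, $S_Y$, $P_Y$ of Figure~\ref{fig:weak}, and invoking the right characterization of $\Ss^*$ (largest $A_L$-invariant subspace in $\ke(P_Y C)$ for the chosen friend $L$) so that the induced-map arguments close. The subtle point worth flagging is that the conclusion holds for \emph{every} $L\in\mathbf{L}(\Ss^*)$, not just the particular friend exhibited in the definition of $\Ss^*$; this is exactly the strengthening supplied by \cite[Theorem~16]{massoumnia1986geometric}, and it is the reason the lemma is stated with the ``for any $L$'' clause rather than ``for some $L$''. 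Accordingly, the proof reduces to: (i) recall that $\Ss^*(C,A;\im B)$ is an unobservability subspace so that $\mathbf{L}(\Ss^*)$ is non-empty and all its elements induce the same observable quotient/restriction structure; (ii) apply Proposition~13 of \cite{massoumnia1986geometric} for $(\bar C,\bar A_L)$; (iii) apply Theorem~16 of \cite{massoumnia1986geometric} for $(C^\flat,A_L^\flat)$.
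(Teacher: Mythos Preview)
Your proposal is correct and aligns with the paper's treatment: the paper does not supply its own proof of Lemma~\ref{lem:both-obsv} but simply attributes the result to \cite[Proposition~13, Theorem~16]{massoumnia1986geometric}, exactly as you indicate. Your pullback/restriction sketch is a correct elaboration of what those cited results assert, and your attention to the ``for any $L\in\mathbf{L}(\Ss^*)$'' clause is apt.
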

\smallskip
\begin{lem}[\!\!\cite{yang2022joint}]\label{lem:square-invert}
    Let system~\eqref{eq:system} be WOR. Assume that $\{C,A,B\}$ is left-invertible. Then, system $\Sigma^\flat$ is square and invertible. Consequently, $\dim\bar{\Y}=p-m$.
\end{lem}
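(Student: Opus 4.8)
The plan is to reduce the lemma to three ingredients: (a) the error system $\{C,A_{L},B\}$ of~\eqref{eq:system-diff} is itself left-invertible; (b) $\Sigma^{\flat}$ inherits this left-invertibility; and (c) $\dim(C\Ss^{*})=m$. Granting these, $\Sigma^{\flat}$ has input space $\U$ and output space $C\Ss^{*}$ of equal dimension $m$, hence is square, and a square left-invertible system is invertible; finally, since $\bar\Y=\Y/(C\Ss^{*})$, squareness gives $\dim\bar\Y=p-\dim(C\Ss^{*})=p-m$.

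For (a) I would invoke that output injection preserves left-invertibility: $B$ is unchanged (hence still monic), and $\V^{*}(A_{L},B;\ke C)=\V^{*}(A,B;\ke C)$, because for any $\V\subseteq\ke C$ one has $A_{L}\V=A\V$, so the $(A,B)$-invariant and $(A_{L},B)$-invariant subspaces contained in $\ke C$ coincide and therefore so do their suprema; consequently Morse's left-invertibility criterion~\cite{morse1971status} is inherited by $\{C,A_{L},B\}$. For (b) I would read off Figure~\ref{fig:weak}: since $\im B\subseteq\Ss^{*}$ and $\Ss^{*}$ is $A_{L}$-invariant (as $L\in\mathbf{L}(\Ss^{*})$), the response of~\eqref{eq:system-diff} from $\tilde x(0)=0$ stays in $\Ss^{*}$ and its output equals $S_{Y}$ applied to the corresponding output of $\Sigma^{\flat}$; hence a zero output of $\Sigma^{\flat}$ from rest forces a zero output of~\eqref{eq:system-diff} from rest, and by (a) a zero input, so $\Sigma^{\flat}$ is left-invertible.

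Ingredient (c) is the crux. The bound $\dim(C\Ss^{*})\ge m$ is immediate, since a left-invertible system has at least as many outputs as inputs and, by (b), $\Sigma^{\flat}$ is such a system with output space $C\Ss^{*}$ and input space $\U$. For the reverse bound I would pass to $\W^{*}:=\W^{*}(C,A;\im B)$, the infimal $(C,A)$-invariant subspace containing $\im B$ (so $\W^{*}\subseteq\Ss^{*}$, as every unobservability subspace is $(C,A)$-invariant), and use two facts: $C\Ss^{*}=C\W^{*}$ (the infimal unobservability subspace and the infimal conditioned-invariant subspace through $\im B$ have the same image under $C$; see~\cite{massoumnia1986geometric},\cite{basile1992controlled}), and the fixed-point identity $\W^{*}=\im B+A(\W^{*}\cap\ke C)$. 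The latter exhibits $C\W^{*}$ as the image of the map $(u,w)\mapsto C(Bu+Aw)$ on $\U\oplus(\W^{*}\cap\ke C)$, which is onto $C\W^{*}$; counting dimensions and using $C(\W^{*}\cap\ke C)=0$ gives $\dim(C\W^{*})=m-\dim\{(u,w):Bu+Aw=0\}\le m$. Combining with $C\Ss^{*}=C\W^{*}$ and the lower bound yields $\dim(C\Ss^{*})=m$ (and, incidentally, that $(u,w)\mapsto Bu+Aw$ is injective on $\U\oplus(\W^{*}\cap\ke C)$).

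I expect the only genuinely delicate point to be the identity $C\Ss^{*}=C\W^{*}$ in (c): it rests on the structural fact that the infimal unobservability subspace exceeds $\W^{*}$ only by directions lying in $\ke C$ (e.g.\ via the unobservability-subspace algorithm $\Ss_{0}=\X$, $\Ss_{k+1}=\W^{*}+(A^{-1}\Ss_{k}\cap\ke C)$, whose fixed point is contained in $\W^{*}+\ke C$), which is standard but requires care. The remaining steps---preservation of left-invertibility under output injection, descent to $\Sigma^{\flat}$ along Figure~\ref{fig:weak}, the dimension count, and the observation that a square left-invertible system is invertible---are routine.
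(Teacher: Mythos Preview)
The paper does not supply its own proof of this lemma; it is quoted verbatim from the external reference~\cite{yang2022joint}, so there is nothing in the present paper to compare against. That said, your argument is correct and self-contained.

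Parts (a) and (b) are standard: for $\V\subseteq\ke C$ one has $A_{L}\V=A\V$, so the classes of $(A,B)$- and $(A_{L},B)$-invariant subspaces in $\ke C$ coincide and left-invertibility carries over; the descent to $\Sigma^{\flat}$ via Figure~\ref{fig:weak} is immediate since $S_{Y}$ is monic. Part (c) is where the content lies, and your two-sided bound is valid. The lower bound $\dim(C\Ss^{*})\ge m$ follows from (b). For the upper bound, the inclusion $\Ss^{*}\subseteq\W^{*}+\ke C$ indeed drops out of the unobservability-subspace algorithm you quote (every iterate past the first is contained in $\W^{*}+\ke C$), giving $C\Ss^{*}=C\W^{*}$; then the fixed-point identity $\W^{*}=\im B+A(\W^{*}\cap\ke C)$ and the rank--nullity count yield $\dim(C\W^{*})\le m$. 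One minor point of exposition: the identity $\dim(C\W^{*})=m-\dim\ke\phi$ is obtained by first writing $\dim\W^{*}=m+\dim(\W^{*}\cap\ke C)-\dim\ke\phi$ and then subtracting $\dim(\W^{*}\cap\ke C)=\dim(\ke C|_{\W^{*}})$; the phrase ``using $C(\W^{*}\cap\ke C)=0$'' alone leaves this chain implicit, so spelling it out would help the reader.
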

\smallskip
\begin{rem}\label{rem:stable}
    When $(C,A)$ is observable, the spectra of $A_L^\flat$ and $\bar A_L$ can both be freely assigned by a friend of $\Ss^{*}$. Consequently, we shall always $L \in \textbf{L}(\Ss^*)$ such that $A_L^\flat$ and $\bar A_L$ are both stable. An algorithm for calculating such $L$ is provided in Appendix~\ref{app:calcL}.
\end{rem}
\smallskip

Under the assumption that the original system \eqref{eq:system} is WOR,  we have derived, via output-injection, an ``error system"~\eqref{eq:system-diff} where the unobservability subspace $\Ss^*$ has been rendered invariant. The error system is represented as the interconnection of the \textit{square and invertible} system $\Sigma^\flat$ with the \textit{autonomous} system~$\bar \Sigma$. The ensuing structure strategically confines all influences from the input space $\U$ through the map $B$ into $\Ss^*$, while the system $\bar \Sigma$ is decoupled from the input. This distinctive feature of WOR systems will be explored in the subsequent section, when the output disturbance $d$ is introduced. 
\subsection{Design of $\D$ for WOR systems}
We now consider the presence of the output disturbance~$d$ on a weakly output-redundant system. Let system~\eqref{eq:sys} by WOR, and apply an output-injection transformation with  $L\in\mathbf{L}(\Ss^{*})$.  This yields, in place of~\eqref{eq:system-diff}, the new error system
%
%
%
%
%
\begin{equation}\label{eq:diff-sys-Wd}
    \begin{split}
        \tilde x^+ &= A_{L} \tilde x + Bu + D_Ld \\
        \tilde{y}_{d} &= C \tilde x + Dd,
    \end{split}
\end{equation}
 where $D_L \coloneqq LD$ denotes the input disturbance map. It is noted that the output disturbance influences both the state and the output of the error system~\eqref{eq:diff-sys-Wd}. 

The next step is to define a preliminary condition on $\D$ with useful implications.
\begin{lem}\label{lem:W-design}
    Let system~\eqref{eq:sys} be WOR. Assume that  $\D \subseteq \Y$ satisfies
    \begin{equation}\label{eq:imD}
        \D \subseteq C\Ss^*.
    \end{equation}
    Then, the map $P_YD: \D \to \bar \Y$ is the zero map.
\end{lem}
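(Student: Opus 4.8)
The plan is to unwind the definitions of the maps appearing in the commutative diagram of Figure~\ref{fig:weak} and show that $P_Y D$ factors through a zero map. Recall that $P_Y : \Y \to \bar\Y = \Y/(C\Ss^*)$ is the canonical projection modulo $C\Ss^*$, so by definition $\ke P_Y = C\Ss^*$. The hypothesis~\eqref{eq:imD} states precisely that $\im D \subseteq C\Ss^* = \ke P_Y$ (here we are using the identification, established just before Problem~\ref{prob:li_d2}, of $\D$ with a subspace of $\Y$ and of $D$ with the corresponding insertion map, so that $\im D = \D$). Hence the composition $P_Y D : \D \to \bar\Y$ sends every element of $\D$ into $P_Y(\D) \subseteq P_Y(\ke P_Y) = 0$, which is to say $P_Y D = 0$.

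First I would state explicitly that, since $D$ is the insertion map of $\D$, we have $\im(P_Y D) = P_Y(\im D) = P_Y(\D)$. Next I would invoke~\eqref{eq:imD} together with $\ke P_Y = C\Ss^*$ to conclude $\D \subseteq \ke P_Y$, whence $P_Y(\D) = 0$. Therefore $\im(P_Y D) = 0$, i.e.\ $P_Y D$ is the zero map from $\D$ to $\bar\Y$, as claimed.

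There is essentially no obstacle here: the statement is a one-line consequence of the definition of the canonical projection and of the standing identification of $\D$ with a subspace of $\Y$. The only point worth a sentence of care is to make sure the reader recalls that $C\Ss^*$ is exactly the kernel of $P_Y$ (this is how $P_Y$ was defined in the caption of Figure~\ref{fig:weak}), and that $\im D = \D$ because $D$ was taken to be monic and then identified with an insertion map. With those two observations in place the proof is immediate.

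\begin{proof}
Recall from the discussion preceding Problem~\ref{prob:li_d2} that $\D\subseteq\Y$ and $D:\D\to\Y$ is the insertion map of $\D$ in $\Y$; in particular $\im D=\D$. By construction (see the caption of Figure~\ref{fig:weak}), $P_Y:\Y\to\bar\Y=\Y/(C\Ss^*)$ is the canonical projection modulo $C\Ss^*$, so $\ke P_Y=C\Ss^*$. Hypothesis~\eqref{eq:imD} gives $\D\subseteq C\Ss^*=\ke P_Y$. Consequently
\[
\im(P_Y D)=P_Y(\im D)=P_Y(\D)\subseteq P_Y(\ke P_Y)=0,
\]
so $P_Y D:\D\to\bar\Y$ is the zero map.
\end{proof}
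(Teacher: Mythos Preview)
Your proof is correct and follows exactly the same reasoning as the paper's: both simply note that $\ke P_Y = C\Ss^*$ by definition of the canonical projection, so the hypothesis $\D\subseteq C\Ss^*$ immediately yields $P_Y D = 0$. The paper's version is just a terser statement of the same argument.
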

\begin{proof}
    As $P_Y$ is the canonical projection  modulo $C\Ss^{*}$ on $\Y$,  $\ke P_Y = C\Ss^*$. Thus,  $P_Y D = 0$ follows directly from \eqref{eq:imD}.
\end{proof}
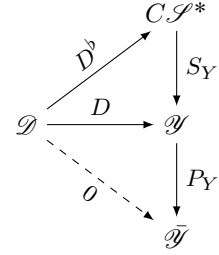
\begin{figure}
    \centering
    \scalebox{1}{\begin{tikzpicture}


\def\xgap{2}
\def\ygap{1.5}
\foreach \i in {0,...,3}{
    \foreach \j in {0,...,3}{
        \coordinate (\j\i) at (\i*\xgap, \j*\ygap);
    }
}
\pgfmathatantwo{\ygap}{\xgap}
\def\labangle{\pgfmathresult}

\tikzset{label/.style = {font=\small, midway}}


\node (D) at (10) {$\mathscr D$};
\node (YmodZ) at (01) {$\bar{\mathscr Y}$};
\node (Z) at (21) {$C\mathscr S^*$};
\node (Y) at (11) {$\mathscr Y$};

\draw[-latex] (D) -- (Z) node [label, above, rotate=\labangle] {$D^\flat$};
\draw[-latex] (D) -- (Y) node [label, above] {$D$};
\draw[-latex, dashed] (D) -- (YmodZ) node [label, below, rotate=-\labangle] {$0$};


\draw[-latex] (Z) -- (Y) node [label, right] {$S_Y$};

\draw[-latex] (Y) -- (YmodZ) node [label, right] {$P_Y$};

\end{tikzpicture}}
    \caption{Decomposition of the map $D$ in Lemma~\ref{lem:W-design}. The codomain restriction $D^\flat=C\Ss^{*}|D$, which is well-defined since $\im D \subseteq C\Ss^*$ by assumption, is also the insertion map of $\D$ in $C\Ss^{*}$.}
    \label{fig:d-inject}
\end{figure}
Lemma~\ref{lem:W-design} is represented by the commutative diagram in Figure~\ref{fig:d-inject}, showing that $d$ does not affect~$\bar \Y$. 
%

Assuming that $\D \neq 0$ satisfies~\eqref{eq:imD}, applying the same decomposition as in~Figure~\ref{fig:weak} to the perturbed system~\eqref{eq:diff-sys-Wd} yields the following equations for the dynamics of the system defined on the factor space $\bar{\X}$:
%
\begin{equation}\label{eq:sigma.bar}
    \begin{aligned}
        \bar{x}^{+} &= \bar A_L \bar{x} + \bar D_L d \\
        \bar{y} &= \bar C \bar{x}
    \end{aligned}
\end{equation}
where $\bar{x}\in\bar{\X}$, $\bar{y}\in\bar{\Y}$, $d\in\D$, and the maps $\bar{A}$, $\bar{D}_{L}$ and $\bar{C}$ are defined in Figure~\ref{fig:d-sys-decompose}.
The main result of this section will be devoted to show that it is always possible to design $\D\neq 0$ satisfying~\eqref{eq:imD} such that system~\eqref{eq:sigma.bar} is left-invertible (with respect to the input  $d$). In turn, this result will lead to the solution of Problem~\ref{prob:li_d2} for system~\eqref{eq:diff-sys-Wd}  by way of reconstructability of $d(\cdot)$ from  $\tilde{y}(\cdot)$ via canonical projection modulo $C\Ss^{*}$.
%
\begin{figure}
    \centering
    \scalebox{1}{\begin{tikzpicture}


\def\xgap{2}
\def\ygap{1.5}
\foreach \i in {0,...,3}{
    \foreach \j in {0,...,3}{
        \coordinate (\j\i) at (\i*\xgap, \j*\ygap);
    }
}
\pgfmathatantwo{\ygap}{\xgap}
\def\labangle{\pgfmathresult}

\tikzset{label/.style = {font=\small, midway}}


\node (D) at (10) {$\mathscr D$};
\node (XmodS) at (01) {$\bar{\mathscr X}$};
\node (X) at (11) {$\mathscr X$};
\node (X2) at (12) {$\mathscr X$};
\node (Y) at (13) {$\mathscr Y$};
\node (XmodS2) at (02) {$\bar{\mathscr X}$};
\node (YmodZ) at (03) {$\bar{\mathscr Y}$};

\draw[-latex] (D) -- (X) node [label, above] {$D_L$};
\draw[-latex] (X) -- (XmodS) node [label, right] {$P$};
\draw[-latex] (D) -- (XmodS) node [label, below, rotate=-\labangle] {$\bar D_L$};



\draw[-latex] (X) -- (X2) node [label, above] {$A_L$};
\draw[-latex] (X2) -- (Y) node [label, above] {$C$};

\draw[-latex] (X2) -- (XmodS2) node [label, right] {$P$};
\draw[-latex] (Y) -- (YmodZ) node [label, right] {$P_Y$};

\draw[-latex] (XmodS) -- (XmodS2) node [label, above] {$\bar A_L$};
\draw[-latex] (XmodS2) -- (YmodZ) node [label, above] {$\bar C$};
\end{tikzpicture}}
    \caption{Decomposition of system $\{C,A_{L},D_L\}$ resulting from $L\in\mathbf{L}(\Ss^{*})$, where $\bar D_L \coloneqq PLD$.}
    \label{fig:d-sys-decompose}
\end{figure}
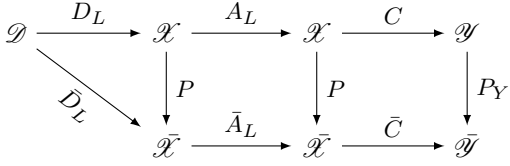
%

To this end, recall that system~\eqref{eq:sigma.bar} is left-invertible if and only if the following two conditions hold (see~\cite[Theorem~5]{morse1971status} and~\cite[Proposition 9]{massoumnia1986geometric}):\smallskip
%
%
\begin{enumerate}[label=\textit{(C\arabic*)}]
\addtolength{\parskip}{1mm}
    \item \label{Cond:monic} $\bar D_L: \D \to \bar \X$ is monic;
    \item \label{Cond:cap=0} $\im(\bar D_L) \cap \bar \V^* = 0$, where
    \begin{equation}\label{eq:def:barV*}
        \bar \V^*  \coloneqq  \V^* (\bar{A}_L, \bar D_L;\ke \bar C).
    \end{equation}
\end{enumerate}
%
%
\begin{figure}
    \centering
    \scalebox{1}{\begin{tikzpicture}


\def\xgap{2}
\def\ygap{1.5}
\foreach \i in {0,...,3}{
    \foreach \j in {0,...,3}{
        \coordinate (\j\i) at (\i*\xgap, \j*\ygap);
    }
}
\pgfmathatantwo{\ygap}{\xgap}
\def\labangle{\pgfmathresult}

\tikzset{label/.style = {font=\small, midway}}


\node (X1) at (10) {$\mathscr X$};
\node (D) at (31) {$\mathscr D$};
\node (S) at (20) {$\mathscr S^*$};
\node (Z) at (21) {$C\mathscr{S}^*$};
\node (Y) at (11) {$\mathscr Y$};
\node (X2) at (12) {$\mathscr X$};
\node (XmodS) at (13) {$\bar{\mathscr X}$};

\draw[-latex] (X1) -- (Y) node [label, below] {$C$};
\draw[-latex] (S) -- (Z) node [label, above] {$C^\flat$};

\draw[-latex] (Y) -- (X2) node [label, below] {$L$};

\draw[-latex] (D) -- (Z) node [label, left] {$D^\flat$};
\draw[-latex] (Z) -- (Y) node [label, left] {$S_Y$};


\draw[-latex] (X2) -- (XmodS) node [label, below] {$P$};
\draw[-latex] (S) -- (X1) node [label, left] {$S$};

\draw[-latex] (Z) -- (XmodS) node [label, above, near start, rotate=-.6*\labangle] {$\bar{L}^\flat$}; 
\draw[-latex] (D) -- (XmodS) node [label, above, rotate=-1*\labangle] {$\bar{D}_L$}; 
\draw[-latex] (Z) -- (X2) node [label, below, rotate=-1*\labangle] {$L^\flat$}; 

\end{tikzpicture}}
    \caption{Diagram showing the action of $D:\D\to\X$ on $\bar \X$ via its codomain restriction on $C\Ss^*$. Note that this diagram can be thought of as ``zooming in'' on the map $\bar D_L$ in Figure~\ref{fig:d-sys-decompose}.}
    \label{fig:S*-D-barX}
\end{figure}
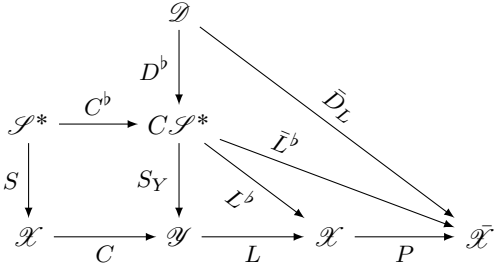
Using the diagram in Figure~\ref{fig:S*-D-barX}, it follows that 
\begin{equation*}
    \bar D_L \coloneqq PLD = \bar L^\flat D^\flat
\end{equation*}
where $\bar L^\flat: C\Ss^* \to \bar \X$ is defined as $\bar L^\flat \coloneqq PLS_Y$, the domain restriction of $\bar L \coloneqq PL$ on $C\Ss^{*}$. 
%
Before proceeding further with the design of $\D$, one must prove that $\bar{L}^\flat$ is a nonzero map, which is clearly  a necessary condition for condition~\ref{Cond:monic} above to hold.
%
%
\begin{prop}\label{prop:L_Z-nonzero}
    Let system~\eqref{eq:sys} be WOR, and let $L:\Y\to\X$ be a friend of $\Ss^*$. Then, $\bar{L}^\flat \neq 0$.
\end{prop}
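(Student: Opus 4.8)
The claim is that $\bar L^\flat = PLS_Y : C\Ss^* \to \bar\X$ is nonzero for any friend $L$ of $\Ss^*$. I would argue by contradiction: suppose $\bar L^\flat = 0$, i.e., $PLS_Y = 0$, which means $L(C\Ss^*) \subseteq \ke P = \Ss^*$. The strategy is to show that this, combined with $L$ being a friend of $\Ss^*$ and the left-invertibility hypothesis, forces something impossible — most naturally, that $\Sigma^\flat = \{C^\flat, A_L^\flat, B^\flat\}$ fails to be invertible, contradicting Lemma~\ref{lem:square-invert}, or that $\Ss^*$ is not actually infimal among unobservability subspaces containing $\im B$.

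\textbf{Key steps.} First I would unpack what $L(C\Ss^*) \subseteq \Ss^*$ gives us. Recall $\Ss^*$ is an unobservability subspace, so $\Ss^* = \ang{\ke HC \mid A+LC}$ for a suitable measurement-mixing $H$; in particular $\Ss^*$ is $(A+LC)$-invariant and $\ke C \cap \Ss^* $ relates to the ``hidden'' part. The combined hypothesis $A_L\Ss^* \subseteq \Ss^*$ (since $L$ is a friend) together with $L(C\Ss^*)\subseteq\Ss^*$ would let me compute $A\Ss^* = (A_L - LC)\Ss^* \subseteq A_L\Ss^* + L(C\Ss^*) \subseteq \Ss^*$, so $\Ss^*$ would be $A$-invariant, not merely $(C,A)$-invariant. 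Then, since $\im B \subseteq \Ss^*$ and $\Ss^*$ is $A$-invariant, $\ang{A \mid \im B} \subseteq \Ss^*$; but by Assumption~\ref{as:obsv} the pair $(A,B)$ is controllable, so $\ang{A\mid\im B} = \X$, forcing $\Ss^* = \X$. This contradicts $\dim(\Ss^*) \leq n - p + m < n$ from Lemma~\ref{lem:dimSs*} (using $p > m$, i.e., output redundancy). So the contradiction route through controllability plus the dimension bound looks clean.

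\textbf{Anticipated obstacle.} The delicate point is the step $A\Ss^* \subseteq \Ss^*$: it relies on $LC\Ss^* \subseteq \Ss^*$, and I have only assumed $L(C\Ss^*) \subseteq \Ss^*$ where $C\Ss^*$ is the \emph{image} $\{Cv : v\in\Ss^*\}$. Since $C\Ss^* = CS\Ss^*$ and for $v\in\Ss^*$ we have $Cv = S_Y C^\flat (\text{preimage})$, the identification $LCv = L S_Y(\text{something in }C\Ss^*) \in L(C\Ss^*) \subseteq \Ss^*$ is exactly what $\bar L^\flat = 0$ says, so this is in fact immediate once the codomain-restriction bookkeeping in Figure~\ref{fig:weak} is made explicit — I would just need to track $C|\Ss^* = S_Y C^\flat$ carefully. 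A secondary subtlety is whether one needs $H$ at all; I believe not, since the argument only uses $(A+LC)$-invariance of $\Ss^*$ (guaranteed by $L\in\mathbf{L}(\Ss^*)$) and never the unobservability structure beyond $\im B\subseteq\Ss^*\subsetneq\X$. If the controllability route is considered too reliant on Assumption~\ref{as:obsv}, an alternative is to invoke that $\Sigma^\flat$ being square and invertible (Lemma~\ref{lem:square-invert}) forces $B^\flat$ monic with $\im B^\flat \cap \V^*(A_L^\flat, B^\flat; \ke C^\flat) = 0$, and then show $\bar L^\flat = 0$ would make the $\bar\Sigma$ dynamics unable to ``see'' the coupling, but the controllability argument is shorter and I would lead with it.
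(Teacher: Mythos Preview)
Your proposal is correct and follows essentially the same argument as the paper's proof: assume $\bar L^\flat = 0$, deduce $LC\Ss^* \subseteq \Ss^*$, combine with $A_L$-invariance of $\Ss^*$ to get $A\Ss^* \subseteq \Ss^*$, and then obtain a contradiction with controllability of $(A,B)$ via $\im B \subseteq \Ss^* \subsetneq \X$ (the paper cites Lemma~\ref{lem:dimSs*} for the strict inclusion, exactly as you do). Your ``anticipated obstacle'' is a non-issue, as you yourself note: $C\Ss^*$ \emph{is} the image $\{Cv : v\in\Ss^*\}$, so $L(C\Ss^*) = LC\Ss^*$ tautologically, and the paper handles this via the identity $PLCS = \bar L^\flat C^\flat$ read off the commutative diagram.
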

%
%
\begin{proof}
    Let $L \in \mathbf{L}(\Ss^*)$. As  $\Ss^{*}$ is $A_{L}$-invariant, it follows that $\im{A_{L}S}\subseteq\Ss^{*}$, hence
    \begin{equation}\label{eq:friendL}
    0 = PA_{L}S =  P(A+LC)S = PAS + PLCS 
    \end{equation}
    Using the diagram in Figure~\ref{fig:S*-D-barX}, it is seen that
    \begin{equation}
        PLCS = \bar{L}^\flat C^\flat.
    \end{equation}
    Consequently,  $\bar{L}^\flat=0$ implies $PAS=0$, which is equivalent~to 
    \[
    A\Ss^{*}\subseteq \ke{P}=\Ss^{*}
    \]
    showing that $\bar{L}^\flat=0$ implies that $\Ss^{*}$ is $A$-invariant as well. Since $\Ss^{*}$ is a proper subset of $\X$ by virtue of Lemma~\ref{lem:dimSs*},  and $\Ss^{*}\supseteq\im{B}$ by definition, it follows that $\bar L^\flat = 0$  contradicts controllability of the pair $(A,B)$, thus violating Assumption~\ref{as:obsv}.
\end{proof}

We are now in the position to state and prove the main result of this section:
\begin{prop}\label{lem:leftInv_Sigma_d}
    Assume that system~\eqref{eq:sys} is WOR. For any $L \in \mathbf{L}(\Ss^*)$, there exists a non-trivial subspace $\D \subseteq  C\Ss^*$ such that system~\eqref{eq:sigma.bar} is left-invertible.
\end{prop}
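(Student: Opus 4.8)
The plan is to satisfy conditions \ref{Cond:monic} and \ref{Cond:cap=0} simultaneously by a careful choice of a one-dimensional (or low-dimensional) subspace $\D\subseteq C\Ss^{*}$. First I would recall from the preceding discussion that $\bar D_L = \bar L^\flat D^\flat$, where $D^\flat:\D\to C\Ss^{*}$ is the insertion map and $\bar L^\flat:C\Ss^{*}\to\bar\X$ satisfies $\bar L^\flat\neq 0$ by Proposition~\ref{prop:L_Z-nonzero}. Since $\bar L^\flat$ is a nonzero linear map, its kernel $\ke\bar L^\flat$ is a proper subspace of $C\Ss^{*}$; hence there exists a nonzero vector $z_0\in C\Ss^{*}$ with $\bar L^\flat z_0\neq 0$. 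Taking $\D\coloneqq\mathrm{span}\{z_0\}$ (with $D^\flat$ its insertion map into $C\Ss^{*}$) makes $\bar D_L$ automatically monic, which settles \ref{Cond:monic}. The remaining work is to arrange that this $z_0$ (or, if more freedom is needed, a suitable choice from the complement of $\ke\bar L^\flat$) also satisfies $\im(\bar D_L)\cap\bar\V^{*}=0$, i.e. $\bar L^\flat z_0\notin\bar\V^{*}$, where $\bar\V^{*}=\V^{*}(\bar A_L,\bar D_L;\ke\bar C)$.

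The main obstacle is the mutual dependence: $\bar\V^{*}$ itself depends on $\bar D_L$, hence on the chosen $\D$. I would break this circularity as follows. For a one-dimensional $\D=\mathrm{span}\{z_0\}$, the supremal $(\bar A_L,\bar D_L)$-invariant subspace contained in $\ke\bar C$ is contained in the supremal output-nulling subspace of the autonomous pair $(\bar A_L, \cdot\,)$ ignoring the input, which is precisely the largest $\bar A_L$-invariant subspace inside $\ke\bar C$, namely $\ang{\ke\bar C\,|\,\bar A_L}$. More usefully, I would invoke that since $(\bar C,\bar A_L)$ is observable (Lemma~\ref{lem:both-obsv}), the only $\bar A_L$-invariant subspace contained in $\ke\bar C$ is $0$; and $\bar\V^{*}$ with a monic one-dimensional input map is generically transversal to $\im\bar D_L$. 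Concretely, one shows that the ``bad'' set of directions $z_0\in C\Ss^{*}$ for which $\bar L^\flat z_0$ lies in the corresponding $\bar\V^{*}$ is contained in a proper algebraic subvariety of $C\Ss^{*}$: whenever $\bar L^\flat z_0\in\bar\V^{*}(z_0)$, the one-dimensional $\im\bar D_L$ is $(\bar A_L,\bar D_L)$-invariant and contained in $\ke\bar C$, which by observability of $(\bar C,\bar A_L)$ forces $\bar L^\flat z_0=0$, contradicting the choice of $z_0$. Thus in fact \emph{any} $z_0\notin\ke\bar L^\flat$ already works, because $\bar\V^{*}$ cannot meet a monic one-dimensional image without that image being an $\bar A_L$-invariant subspace of $\ke\bar C$ (here one uses that $\V^{*}$ is the supremal $(\bar A_L,\bar D_L)$-invariant subspace in $\ke\bar C$, and any friend $F$ of $\V^{*}$ restricted to a line in $\im\bar D_L$ that lies in $\V^{*}$ makes that line $(\bar A_L+\bar D_L F)$-invariant inside $\ke\bar C$).

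So the key steps, in order, are: (i) invoke Proposition~\ref{prop:L_Z-nonzero} to get $\bar L^\flat\neq 0$ and pick $z_0\in C\Ss^{*}\setminus\ke\bar L^\flat$, setting $\D=\mathrm{span}\{z_0\}$; (ii) observe $\bar D_L=\bar L^\flat D^\flat$ is monic, giving \ref{Cond:monic}; (iii) suppose for contradiction $\im\bar D_L\cap\bar\V^{*}\neq 0$, so $\bar L^\flat z_0\in\bar\V^{*}$ and, $\im\bar D_L$ being one-dimensional, $\im\bar D_L\subseteq\bar\V^{*}$; (iv) use a friend $F$ of $\bar\V^{*}$ together with $\im\bar D_L\subseteq\bar\V^{*}\subseteq\ke\bar C$ to conclude $\im\bar D_L$ is $(\bar A_L+\bar D_LF)$-invariant and hence contained in the supremal $\bar A_L$-invariant subspace of $\ke\bar C$, which is $0$ by observability of $(\bar C,\bar A_L)$ (Lemma~\ref{lem:both-obsv}); (v) this forces $\bar L^\flat z_0=0$, contradicting (i). Therefore \ref{Cond:cap=0} holds, and by the left-invertibility criterion (\cite[Theorem~5]{morse1971status}, \cite[Proposition~9]{massoumnia1986geometric}) system~\eqref{eq:sigma.bar} is left-invertible. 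The delicate point to get right is step (iv) — precisely why $\im\bar D_L\subseteq\bar\V^{*}$ upgrades to $\bar A_L$-invariance rather than merely $(\bar A_L,\bar D_L)$-invariance — and I expect that to be where the argument needs the most care, possibly requiring that $\bar D_L F$ maps $\im\bar D_L$ back into $\im\bar D_L$ (true since the image is one-dimensional and contained in $\bar\V^{*}$, on which $\bar A_L+\bar D_LF$ acts invariantly, combined with $\bar A_L z \in \ke\bar C$ following from $\bar C z = 0$ and $\bar C(\bar A_L+\bar D_LF)z=0$).
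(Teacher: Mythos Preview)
Your overall strategy is sound and close in spirit to the paper's argument: both proofs hinge on the observation that $\im\bar D_L\subseteq\bar\V^{*}$ is impossible because it would contradict observability of $(\bar C,\bar A_L)$. The paper starts from a maximal $\D$ (a complement of $\ke\bar L^\flat$ in $C\Ss^{*}$), uses that observation to conclude $\im\bar D_L\not\subseteq\bar\V^{*}$, and then \emph{shrinks} $\D$ to a subspace $\D_1$ whose image avoids $\bar\V^{*}$, checking that the new $\bar\V_1^{*}$ is contained in the old $\bar\V^{*}$. Your one-dimensional choice short-circuits the shrinking step entirely, since for a line $\im\bar D_L\not\subseteq\bar\V^{*}$ already gives $\im\bar D_L\cap\bar\V^{*}=0$. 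That is a genuine simplification for the existence statement; the price is that you produce only a one-dimensional $\D$, whereas the paper's construction naturally yields larger subspaces (relevant for the ``maximum coverage'' discussion).

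There is, however, a gap in your step~(iv). You try to argue that the line $\im\bar D_L$ is $(\bar A_L+\bar D_LF)$-invariant and then lies in an $\bar A_L$-invariant subspace of $\ke\bar C$; neither implication is correct as stated, and the attempted repair at the end does not close it. The clean (and much simpler) argument---which is exactly what the paper does---works on $\bar\V^{*}$ rather than on $\im\bar D_L$: if $\im\bar D_L\subseteq\bar\V^{*}$, then from $(\bar A_L,\bar D_L)$-invariance one has
\[
\bar A_L\bar\V^{*}\ \subseteq\ \bar\V^{*}+\im\bar D_L\ =\ \bar\V^{*},
\]
so $\bar\V^{*}$ itself is $\bar A_L$-invariant. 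Since $\bar\V^{*}\subseteq\ke\bar C$ and $(\bar C,\bar A_L)$ is observable (Lemma~\ref{lem:both-obsv}), this forces $\bar\V^{*}=0$, hence $\im\bar D_L=0$, contradicting $\bar L^\flat z_0\neq 0$. Replace your step~(iv) with this and the proof is complete.
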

\begin{proof}
Fix, arbitrarily, $L \in \mathbf{L}(\Ss^*)$. We first show that it is always possible to select $\D\subseteq C\Ss^{*}$ such that its insertion map $D^{\flat}:\D \to C\Ss^{*}$ yields a monic map $\bar D_L: \D \to \bar \X$, i.e., that condition~\ref{Cond:monic} is satisfied. To this end, select $\D$ as a complementary subspace of $\ke{\bar{L}}^{\flat}$ in $C\Ss^{*}$, that is, let $\D\subseteq\Y$ satisfy
    \begin{equation}\label{eq:assign-D}
        C\Ss^* = \ke \bar L^\flat \oplus \D.
    \end{equation}
This is possible by virtue of the fact that $\bar{L}^{\flat}\neq0$ implies $\dim(\im{\bar{L}^{\flat}})>0$, hence, by the first isomorphism theorem,
$\dim(C\Ss^{*}/\ke{\bar{L}^{\flat}})>0$. Since any complementary subspace of $\ke{\bar{L}}^{\flat}$ in $C\Ss^{*}$ is isomorphic to $C\Ss^{*}/\ke{\bar{L}^{\flat}}$, it follows that $\dim\D>0$ in~\eqref{eq:assign-D}. Since
\[
\ke{\bar{D}_{L}}=\ke(\bar{L}^{\flat}D^{\flat}) ={D^{\flat}}^{-1}\left(\ke{\bar{L}^{\flat}}\cap \im{D^{\flat}}\right)
\]
and $\ke{\bar{L}^{\flat}}\cap \D=0$ by~\eqref{eq:assign-D}, one obtains 
\[
\ke{\bar{D}_{L}}=\ke{D^{\flat}}=0
\]
since $D^{\flat}$ is monic by definition. Consequently, $\bar{D}_{L}$ is monic.

Next, we show that $\D$ can be chosen such that, in addition, condition~\ref{Cond:cap=0} holds. Clearly, if $\D$ selected at the previous step is such that  $\im(\bar D_L) \cap \bar \V^* = 0$, the proof is complete. Conversely, assume that this is not the case, that is, $\im(\bar D_L) \cap \bar \V^* \neq 0$ for the previous choice of $\D$ satisfying~\eqref{eq:assign-D}. We claim that, necessarily,
    \begin{equation}\label{eq:barD_L:no-subset}
        \im \bar D_L \not\subseteq \bar \V^*.
    \end{equation}
From the definition of $\bar \V^*$, it follows that
    \begin{equation*}
        \bar A_L \bar \V^* \subseteq \bar \V^* + \im \bar D_L = \bar \V^*,
    \end{equation*}
    which implies $\bar \V^*$ is also $\bar A_L$-invariant. By definition, $\bar \V^* \subseteq \ke \bar C$, which implies that $\bar \V^*\neq0$ is contained in the unobservable subspace of the pair $(\bar C,\bar A_L)$. This contradicts the fact that $(\bar C,\bar A_L)$ is observable, as asserted in Lemma~\ref{lem:both-obsv}. Therefore, we conclude that \eqref{eq:barD_L:no-subset} must hold. From \eqref{eq:barD_L:no-subset}, it is possible to define a non-zero subspace $\bar \N \subset \bar \X$ such that
    \begin{equation}\label{eq:barN}
        \im \bar D_L = \left( \bar \V^* \cap \im \bar D_L \right) \oplus \bar \N .
    \end{equation} 
    Define a new subspace $\D_1$ as
    \begin{equation*}
        \D_1 \coloneqq \left( \bar D_L \right)^{-1} \bar \N
    \end{equation*}
    and note that $\D_{1} \subseteq \D \subseteq C\Ss^*$. Let $D_1^\flat: \D_1 \to C\Ss^*$ denote the domain restriction of $D^{\flat}:\D\to C\Ss^{*}$ to $C\Ss^{*}$, and $\bar D^1_L: \D_1 \to \bar \X$ the domain restriction of $\bar{D}_{L}$ to   $\D_{1}$ (see Figure~\ref{fig:D1}).  Notice that $\bar D_L^1$ is still monic since $\bar D_L$ is monic and $\D_1 \subseteq \D$. Moreover, since
    \begin{equation*}
        \bar D_L^1 \D_1 = \bar D_L^1 \left( \bar D_L^{-1} \bar \N \right) = \bar \N
    \end{equation*}
    one obtains
    \begin{equation}\label{eq:ImbarD1L}
        \im \bar D_L^1 = \bar \N.
    \end{equation}
    Since $\bar \N \cap \bar \V^* = 0 $ by virtue of \eqref{eq:barN}, from~\eqref{eq:ImbarD1L} it follows that 
    \begin{equation}\label{eq:lem:part}
        \bar \V^* \cap \im \bar D_L^1 = 0.
    \end{equation}
    Note  that $\bar \V^*$ defined in \eqref{eq:def:barV*} is computed  using $\bar D_L$, not $\bar D_L^1$; consequently, \eqref{eq:lem:part} is not yet sufficient to conclude that system~\eqref{eq:sigma.bar} is left-invertible when $D_{L}^{1}$ replaces $D_{L}$ (alternatively, $\D_{1}$ replaces $\D$).
 To show that this is indeed the case, define
    \begin{equation*}
        \bar \V^*_1 \coloneqq \V^*(\bar{A}_L, \bar D_L^1; \ke \bar C)
    \end{equation*}
    and note that, by the properties of $(A,B)$-invariant subspaces,
    \begin{equation}\label{eq:prop:V*}
        \bar A_L \bar \V^*_1 \subseteq \bar \V_1^* + \im \bar D_L^1.
    \end{equation}
   Since, by virtue of \eqref{eq:barN} and \eqref{eq:ImbarD1L}
    \begin{equation*}
        \im \bar D_L^1 \subseteq \im \bar D_L ,
    \end{equation*}
    it follows from \eqref{eq:prop:V*} that
    \begin{equation*}
        \bar A_L \bar \V^*_1 \subseteq \bar \V_1^* + \im \bar D_L,
    \end{equation*}
    which implies that $\bar \V^*_1$ is also a $(\bar A_L,\bar D_L)$-invariant subspace contained in $\ke \bar C$. Since $\V^*$ is the supremal element of this class, one obtains
    \begin{equation*}
        \bar{\V}^*_1 \subseteq \bar{\V}^*,
    \end{equation*}
    which, according to \eqref{eq:lem:part}, yields
    \begin{equation*}
        \bar{\V}^*_1 \cap \im \bar{D}^L_1= 0.
    \end{equation*}
    As a result, left-invertibility of system~\eqref{eq:sigma.bar} follows from selecting $\D_{1}$ as the disturbance subspace. 
\end{proof}
    \begin{figure}
    \centering
    \scalebox{1}{\begin{tikzpicture}


\def\xgap{3}
\def\ygap{2}
\foreach \i in {0,...,3}{
    \foreach \j in {0,...,3}{
        \coordinate (\j\i) at (\i*\xgap, \j*\ygap);
    }
}
\pgfmathatantwo{\ygap}{\xgap}
\def\labangle{\pgfmathresult}

\tikzset{label/.style = {font=\small, midway}}


\node (D) at (10) {$\mathscr D$};
\node (D1) at (20) {$\mathscr D_1$};
\node (CS) at (11) {$C\mathscr{S}^*$};
\node (barX) at (21) {$\bar \X$};

\draw[-latex] (D1) -- (D) node [label, left] {$i$};

\draw[-latex] (D) -- (CS) node [label, below] {$D^\flat$};

\draw[-latex] (CS) -- (barX) node [label, right] {$\bar L^\flat$};

\draw[-latex, densely dotted] (D1) -- (barX) node [label, above] {$\bar D^1_L$};

\draw[-latex, densely dotted] (D1) -- (CS) node [label, below, near end, rotate=-1*\labangle] {$D_1^\flat$}; 

\draw[-latex] (D) -- (barX) node [label, above, near end, rotate=1.2*\labangle] {$\bar D_L$}; 

\end{tikzpicture}}
    \caption{Commutative diagrams illustrating the newly defined maps (in dotted arrows) for $\D_1 \coloneqq \left( \bar D_L \right)^{-1} \bar \N$. The  nondescript notation $i$ is used for the insertion map $i:\D_{1}\to\D$.}
    \label{fig:D1}
    \end{figure}
\smallskip
    
The following result, providing the solution of Problem~\ref{prob:li_d2} for WOR systems, is essentially a corollary of Proposition~\ref{lem:leftInv_Sigma_d}.
\begin{thm}\label{thm:WOR:li}
   Assume that system~\eqref{eq:sys} is WOR. Then, there exists a non-trivial subspace $\D \subseteq\Y$ solving Problem~\ref{prob:li_d2}.
\end{thm}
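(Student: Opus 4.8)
The plan is to obtain Theorem~\ref{thm:WOR:li} as a cascade (triangular) consequence of Proposition~\ref{lem:leftInv_Sigma_d}. First I would fix a friend $L\in\mathbf{L}(\Ss^{*})$ (one may take it stabilizing, as in Remark~\ref{rem:stable}, if a practical reconstructor is also wanted, but plain left-invertibility is all that is needed here). The key preliminary observation is that left-invertibility of~\eqref{eq:sys} with respect to the aggregate input $u_{a}=u\oplus d$ is equivalent to left-invertibility of the error system~\eqref{eq:diff-sys-Wd} with respect to $u_{a}$: fixing $\hat x(0)=0$, the output-injection transformation $\hat x^{+}=(A+LC)\hat x-Ly_{d}$, $\tilde y_{d}=y_{d}-C\hat x$ (equivalently $\hat x^{+}=A\hat x-L\tilde y_{d}$, $y_{d}=\tilde y_{d}+C\hat x$) is a causal linear bijection between the signals $y_{d}(\cdot)$ and $\tilde y_{d}(\cdot)$, so the two input--output maps $\theta_{0}$ have the same left-invertibility properties with respect to $u_{a}$. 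Hence it suffices to exhibit a non-trivial $\D\subseteq\Y$ for which~\eqref{eq:diff-sys-Wd} is left-invertible with respect to $u_{a}$.

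I would then pick exactly the subspace $\D\subseteq C\Ss^{*}$ supplied by Proposition~\ref{lem:leftInv_Sigma_d}, so that the factor system~\eqref{eq:sigma.bar} on $\bar{\X}=\X/\Ss^{*}$ is left-invertible with respect to $d$. The cascade is unlocked by two bookkeeping identities: (i) $PB=0$, because $\im B\subseteq\Ss^{*}=\ke P$; and (ii) $P_{Y}D=0$, which is precisely Lemma~\ref{lem:W-design} applied to $\D\subseteq C\Ss^{*}$. Now use the $x_{0}=0$ formulation of left-invertibility: let $u(\cdot)\in\mathbb U$, $d(\cdot)\in\mathbb D$ and $\tilde x(0)=0$ produce $\tilde y_{d}\equiv 0$. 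Applying $P_{Y}$ to $\tilde y_{d}=C\tilde x+Dd$ and invoking (ii) gives $\bar C\bar x\equiv 0$ with $\bar x:=P\tilde x$; applying $P$ to the state equation of~\eqref{eq:diff-sys-Wd} and invoking (i) shows that $\bar x$ satisfies exactly~\eqref{eq:sigma.bar} with $\bar x(0)=0$. Left-invertibility of~\eqref{eq:sigma.bar} with respect to $d$ then forces $d(\cdot)\equiv 0$ on $[0,T]$.

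With $d\equiv 0$, system~\eqref{eq:diff-sys-Wd} reduces to the error system~\eqref{eq:system-diff}, i.e., $\{C,A_{L},B\}$, which is left-invertible because $\{C,A,B\}$ is (Assumption~\ref{as:li}) and left-invertibility is preserved under output injection — the characterization ``$B$ monic and $\im B\cap\V^{*}(A,B;\ke C)=0$'' is unchanged, since within $\ke C$ the notions of $(A,B)$- and $(A+LC,B)$-invariance coincide, so $\V^{*}(A+LC,B;\ke C)=\V^{*}(A,B;\ke C)$. Consequently $\tilde y\equiv 0$ with $\tilde x(0)=0$ forces $u(\cdot)\equiv 0$. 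Therefore $u_{a}=u\oplus d\equiv 0$, so~\eqref{eq:diff-sys-Wd}, and hence~\eqref{eq:sys}, is left-invertible with respect to $u_{a}$; the chosen $\D\subseteq C\Ss^{*}\subseteq\Y$ (non-trivial by Proposition~\ref{lem:leftInv_Sigma_d}) solves Problem~\ref{prob:li_d2}.

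I do not expect a genuine obstacle, since the substantive work already resides in Proposition~\ref{lem:leftInv_Sigma_d}; the only points requiring care are the two identities $PB=0$ and $P_{Y}D=0$ that make the $\bar\X$-dynamics see $d$ but not $u$ (so that $d$ can be peeled off first and $u$ second), and the standard invariance of left-invertibility under output injection, which legitimizes passing back and forth between~\eqref{eq:sys} and~\eqref{eq:diff-sys-Wd}.
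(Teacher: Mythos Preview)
Your proposal is correct and follows essentially the same route as the paper's proof: reduce to the error system~\eqref{eq:diff-sys-Wd} via invariance of left-invertibility under output injection, project onto $\bar{\X}$ and $\bar{\Y}$ to isolate $d$ via the quotient system~\eqref{eq:sigma.bar} furnished by Proposition~\ref{lem:leftInv_Sigma_d}, then finish with left-invertibility of $\{C,A_{L},B\}$. The paper states the key invariance facts more tersely, whereas you spell out the identities $PB=0$, $P_{Y}D=0$ and the equality $\V^{*}(A+LC,B;\ke C)=\V^{*}(A,B;\ke C)$ explicitly; the logic is the same.
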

\begin{proof}
We begin by recalling that left-invertibility is invariant under output-injection transformations. Consequently, system~\eqref{eq:sys} is left-invertible with respect to the aggregate input~$u_{a}$ if and only if so is system~\eqref{eq:diff-sys-Wd}. Choose, arbitrarily, $L\in\mathbf{L}(\Ss^{*})$, and select $\D\subseteq C\Ss^{*}$ as in Proposition~\ref{lem:leftInv_Sigma_d}. Let $(u(\cdot),d(\cdot))\in\mathbb{U}\times \mathbb{D}$ be such that $\theta_{0}(u,d)=\theta_{0}(0,0)$ for system~\eqref{eq:diff-sys-Wd}. In particular, this implies that $\bar{y}(t)=P_{Y}\tilde{y}_{d}(t)=0$ for all $t\in [0,T]$. Since system~\eqref{eq:sigma.bar} is left-invertible by construction, it follows that $d(\cdot)=0$. Letting $d=0$ in~\eqref{eq:diff-sys-Wd}, one obtains $u(\cdot)=0$ by way of left-invertibility of the triplet $\{C,A_{L},B\}$. Consequently, system~\eqref{eq:diff-sys-Wd} is left-invertible with respect to $u_{a}=(u,d)$, and so is system~\eqref{eq:sys}.
\end{proof}

\begin{rem}
A numerical algorithm for the computation of the matrix representation of the insertion map~$D$, based on the method presented in the proof of Proposition~\ref{lem:leftInv_Sigma_d}, is provided in Appendix~\ref{app:calcD}. We stress that other solutions may exist, given the fact that Theorem~\ref{thm:WOR:li} provides a sufficient condition. It is indeed an open problem whether the proposed algorithm to find $\D$ defines the supremal element of all disturbance spaces solving Problem~\ref{prob:li_d2}.
\end{rem}
\subsubsection*{Strategy for reconstructing the input in WOR systems}
The strategy for producing a reconstruction $\hat{u}$ of $u$, which stems from the results presented in this section, is illustrated by the scheme in Figure~\ref{fig:scheme.WOR}. The observer~\eqref{eq:Luen} with injection gain $L\in\mathbf{L}(\Ss^{*})$ (selected such that the spectrum of $A_{L}$ is stable) is used to generate the ``output error'' $\tilde{y}_{d}$ of  system~\eqref{eq:diff-sys-Wd}. Projection modulo $C\Ss^{*}$ of $\tilde{y}_{d}$ yields the output $\bar{y}$ of system~\eqref{eq:sigma.bar}, which is used to reconstruct the disturbance $d(\cdot)$ by way of left-inversion of the triplet $\{\bar{C},\bar{A}_{L},\bar{D}_{L}\}$. The resulting signal $\hat{d}(\cdot)$ is used to reconstruct the output $y(\cdot)$ of the unperturbed  system~\eqref{eq:system} as $\hat{y}=y_{d}-D\hat{d}$. The signal $\hat{y}(\cdot)$, in turn, is used to generate $\hat{u}(\cdot)$ via left-inversion of system~\eqref{eq:system}. Alternatively, the lower-dimensional model $\Sigma^\flat=\{C^\flat,A_L^\flat,B^\flat\}$ can be used in place of the original system, as this option may be advantageous from a computational standpoint. Specifically, once the disturbance signal has been reconstructed, a recostruction of the output~$\tilde{y}$ of the error system~\eqref{eq:system-diff} is provided by $\hat{\tilde{y}}=\tilde{y}_d-D\hat{d}$. The signal $\hat{\tilde{y}}(\cdot)$ is then used to reconstruct $u(\cdot)$ by way of left-inversion of~$\Sigma^\flat$.
\begin{figure}[ht]
\centering
\includegraphics[width=0.99\linewidth]{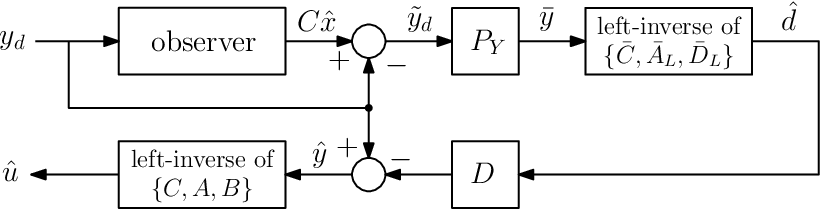}
       \caption{Strategy to reconstruct $u$ for WOR systems.}
        \label{fig:scheme.WOR}
\end{figure}
\subsection{Attacker's Sensitivity to Disturbance}
In the case of WOR systems, since  $\Y = \im C$, the only subspace $\D_a\subseteq\Y$ that can be chosen by the attacker such that $\Y = \im C \oplus \D_a$ is the trivial subspace, i.e., $\D_a = 0$. This implies that no projection methods can be used to eliminate the effect of the disturbance on the output. 
There is, however, one possible strategy for the attacker to correctly reconstruct the input $u$ \textit{if} it knows $\D$, as it can implement the same strategy as the defense. 
This class of attacker may be considered as ``subspace aware'', and strategies addressing this eventuality will be considered in future work. Further analysis of the obfuscation properties of our scheme, namely if it is necessary at all for the attacker to have knowledge of the exact $\D$ used by the defender, is also left for future work. In Section~\ref{sec:simu}, we show how our proposed strategy may indeed be a solution to the obfuscation problem, by considering an attacker using a least squares algorithm to reconstruct $u$ from $y_d$. 
%

%
%
%

\section{Simulation Example: WOR systems} \label{sec:simu}
\begin{figure*}[t]
    \centering
    \scalebox{0.85}{\input{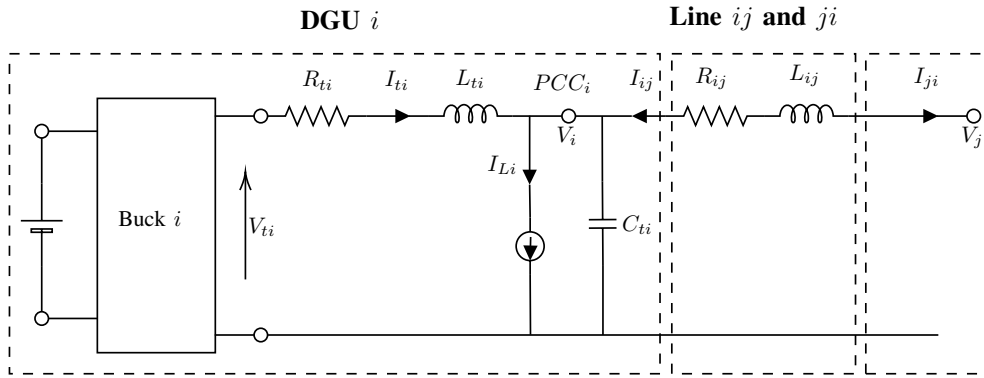}}
    \caption{Schematic diagram of the coupled distributed generation units (DGUs).}
    \label{fig:DGU}
\end{figure*}
In this section, we provide an example for WOR systems, showing that Problem~\ref{prob:li_d} is indeed solved for our design of $\D$, as well as demonstrating how the disturbance injection can bias the estimate of the input by an eavesdropping attacker.
We consider an islanded DC microgrid composed of two coupled distributed generation units (DGUs), where buck converters are coupled via RLC filters (see Fig.~\ref{fig:DGU}). Each DGU can be modeled as the following dynamical system \cite{tucci2017line}:
\begin{equation}
    \label{eq:DGUi}
    \begin{split}
        C_i\dot V_i &= I_{ti} - I_{Li} + \sum_{j\in \mathcal N_i} \frac{V_j - V_i}{R_{ij}} \\
        L_{ti} \dot I_{ti} &= V_{ti} - V_i - R_{ti} I_{ti} 
    \end{split}
\end{equation}
where $V_i, I_{ti}$ are the voltage at the point of common connection and terminal current, $V_{ti}$ is the terminal voltage, and $I_{Li}$ is a load current.
The terminal voltage $V_{ti}$ is the control input, which is obtained via the decentralized PI controller defined in \cite{tucci2017line}, ensuring that $V_i$ asymptotically tracks an externally defined reference.
The parameters $C_i, R_{ti},L_{ti}$ are the capacitance, inductance and resistance of the DGU's RLC filter.
The system composed of two DGUs can be cast as~\eqref{eq:system} by taking the closed-loop dynamics from \cite{tucci2017line}, and $u = I_L = \col_{i\in\{1,2\}} I_{Li}$, and $y  = \col_{i\in \{1,2\}} \matrices{V_i & I_{ti}}^\top$. In this context, the estimation of $u$ can be used to achieve improved performance and system balancing by the system operator. However, an eavesdropping attacker may exploit estimated values of the load currents to infer users' personal habits and behaviors \cite{kumar2019smart}. 

We consider an attacker capable of storing $r \in \mathbb N$ successive samples of $y_d$, and then performing a least-squares algorithm, such as the one proposed in~\cite{ansari2019deadbeat}, to construct an estimate $\hat u_a$ of the input. Specifically, the attacker solves the following least-squares problem:\begin{equation}
    \hat u_a \doteq \arg\min_{\ul{\hat u}(k-1)}  \frac{1}{2} \left\| \ul{y_d}(k) - \Phi \ul{\hat u}(k-1) \right\|^2,
\end{equation}
where $\ul{y_d}(k) = \col({y}_d(k-r),\dots,{y}_d(k))$, $\ul{\bar y}(k) = \col(\hat{u}(k-r),\dots,\hat{u}(k))$, and
$\Phi$ is composed of the Markov parameters of the system with respect to the input matrix.
For $r$ larger than the relative degree of the system, and for $\D = 0$, this algorithm gives exact reconstruction of $u$ with some delay.
To illustrate the effects of injecting $Dd$ onto the communicated values of $y$, we consider two cases: a. the case in which $d = 0$; b. the case in which a disturbance signal is applied.

\subsection{No disturbance}
The results of the first scenario, in which no dithering signal is injected, are presented in Figure~\ref{fig:Wdeq0}. In the figure, a comparison of the unknown input with various estimates is provided, elementwise. We show the estimates $\hat u$, defined by the defender, and $\hat u_a$, defined by the attacker as in Definition~\ref{def:atk}. As expected, given the left-invertibility of the system, all estimates are accurate. 
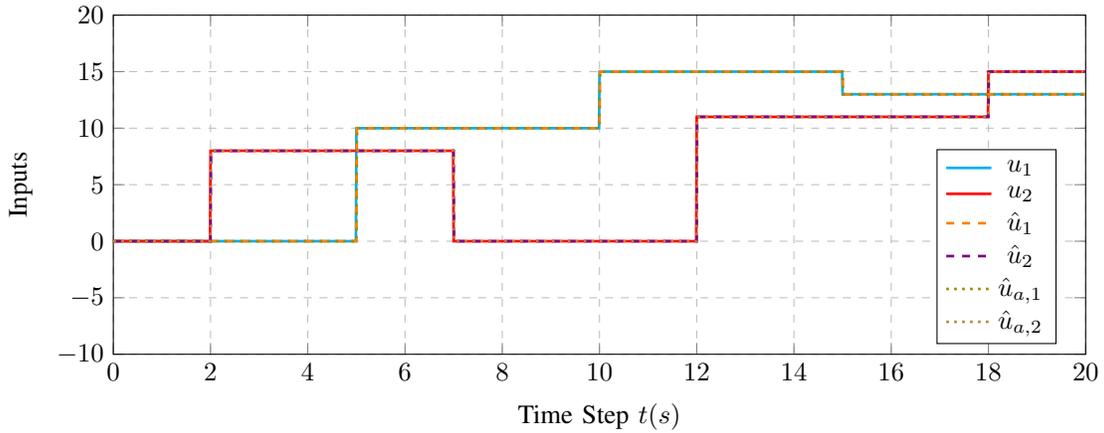
\begin{figure*}[t]
	\centering
	\begin{tikzpicture}
		\begin{axis}[
            xlabel={Time Step $t(s)$},
    		ylabel={Inputs},
			xmin=0, xmax=20,
			ymin=-10, ymax=20,
			xtick={0,2,4,6,8,10,12,14,16,18,20},
			ytick={-10,-5,0,5,10,15,20},
			legend pos=south east,
			ymajorgrids=true,
			grid=both,
			grid style=dashed,
			width=0.8*\textwidth,
			height=0.42*0.8*\textwidth
		]
                
			\addplot[ line width=1pt, solid, color=cyan ] table {fig/Wdeq0/mu1.txt};
			\addlegendentry{$u_1$}

            \addplot[ line width=1pt, solid, color=red ] table {fig/Wdeq0/mu2.txt};
			\addlegendentry{$u_2$}

            \addplot[ line width=1pt, dashed, color=orange] table {fig/Wdeq0/muHat1.txt};
			\addlegendentry{$\hat u_1$}

            \addplot[ line width=1pt, dashed, color=violet ] table {fig/Wdeq0/muHat2.txt};
			\addlegendentry{$\hat u_2$}

            \addplot[ line width=1pt, dotted, color=olive ] table {fig/Wdeq0/muHat_naive1.txt};
			\addlegendentry{$\hat u_{a,1}$}

            \addplot[ line width=1pt, dotted, color=brown ] table {fig/Wdeq0/muHat_naive2.txt};
			\addlegendentry{$\hat u_{a,2}$}
   
		\end{axis}
	\end{tikzpicture}
	\caption{Actual input ($u_i, i\in \{1,2\}$) and its reconstruction from the defender ($\hat u_i, i\in \{1,2\}$) and the least-square eavesdropper ($\hat u_{a,i}, i\in \{1,2\}$), without injected signal, i.e., $Dd(k) = 0\ \forall k$.}
    \label{fig:Wdeq0}
\end{figure*}

\subsection{Injected disturbance}
In this second scenario, we consider the case in which the dithering signal $Dd$ is injected into the output. Here we choose $D$ following the steps stated in Appendix~\ref{app:calcD}, which yields
\begin{equation*}
    \mathtt D = \begin{bmatrix}
        0.7957  &  0.5072 \\
   -0.0220 &  -0.0140 \\
   -0.0000 &  -0.0000 \\
   -0.5445 &  -0.4655 \\
    0.0135 &   0.0116 \\
    0.0000 &   0.0000
    \end{bmatrix}.
\end{equation*}
Recall that the results presented in this paper do not rely on any specific realization of $d(k)$, so long as a valid $\D$ (and therefore $\mathtt D$) is designed. 
To substantiate the results in the remainder of this section, however, let us propose the following definition:
\begin{equation}\label{eq:simu:d}
    \begin{split}
        &\check{x}^+ = \mathtt A\check{x} + \mathtt B \check{u}\\
        &d = \mathtt{D}^\dagger (\mathtt C\check x - y)
    \end{split}
\end{equation}
where $\check x$ is a \textit{simulated} state that has the same dynamics as the system, $\check x(0) = 0$, and $\check u$ is a fictitious input meant to obfuscate the true value of $u$. The design of this injected signal is inspired by \textit{covert attacks}, a class of data injection attacks that can disrupt certain anomaly detection algorithms~\cite{teixeira2015secure}.
As mentioned, the definition \eqref{eq:simu:d} is provided as a practical example, and further investigation of how to \textit{best} design $d(k)$ for obfuscation purposes is left for future work.

As shown in Figure~\ref{fig:Wdneq0}, the eavesdropper using the least square technique is no longer capable of estimating the unknown input without a bias, but the defender is still capable of reconstructing the input $u$ precisely. 
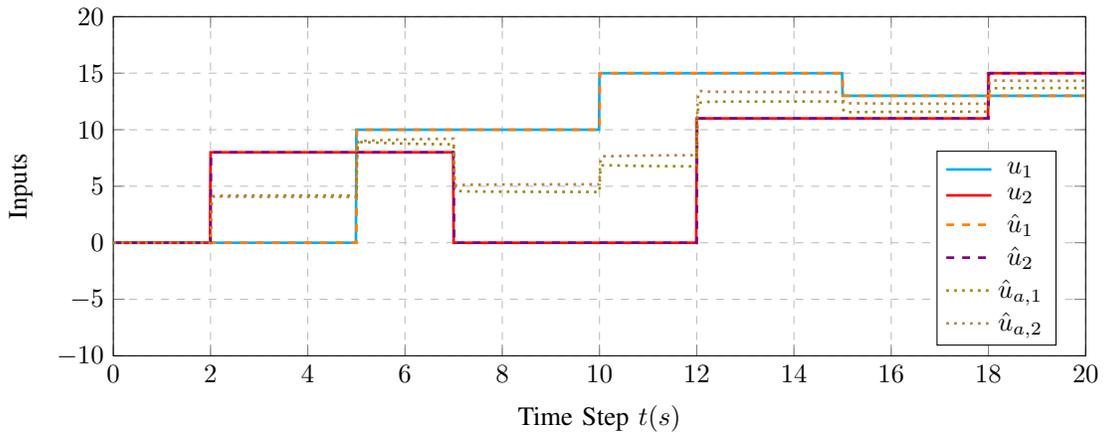
\begin{figure*}[t]
	\centering
	\begin{tikzpicture}
		\begin{axis}[
            xlabel={Time Step $t(s)$},
    		ylabel={Inputs},
			xmin=0, xmax=20,
			ymin=-10, ymax=20,
			xtick={0,2,4,6,8,10,12,14,16,18,20},
			ytick={-10,-5,0,5,10,15,20},
			legend pos=south east,
			ymajorgrids=true,
			grid=both,
			grid style=dashed,
			width=0.8*\textwidth,
			height=0.42*0.8*\textwidth
		]
                
			\addplot[ line width=1pt, solid, color=cyan ] table {fig/Wdneq0/mu1.txt};
			\addlegendentry{$u_1$}

            \addplot[ line width=1pt, solid, color=red ] table {fig/Wdneq0/mu2.txt};
			\addlegendentry{$u_2$}

            \addplot[ line width=1pt, dashed, color=orange] table {fig/Wdneq0/muHat1.txt};
			\addlegendentry{$\hat u_1$}

            \addplot[ line width=1pt, dashed, color=violet ] table {fig/Wdneq0/muHat2.txt};
			\addlegendentry{$\hat u_2$}

            \addplot[ line width=1pt, dotted, color=olive ] table {fig/Wdneq0/muHat_naive1.txt};
			\addlegendentry{$\hat u_{a,1}$}

            \addplot[ line width=1pt, dotted, color=brown ] table {fig/Wdneq0/muHat_naive2.txt};
			\addlegendentry{$\hat u_{a,2}$}

		\end{axis}
	\end{tikzpicture}
	\caption{Actual input ($u_i, i\in \{1,2\}$) and its reconstruction from the defender ($\hat u_i, i\in \{1,2\}$) and the least-square eavesdropper ($\hat u_{a,i}, i\in \{1,2\}$), with injected signal, i.e., $Dd(k) \neq 0\ \forall k$.}
    \label{fig:Wdneq0}
\end{figure*}
%
%

\section{Conclusions and Future Work}
\label{sec:end}
In this paper, we have introduced the concept of output redundancy in LTI systems and investigated the resulting properties from a geometric approach perspective. We have differentiated between Strong Output Redundancy (SOR) and Weak Output Redundancy (WOR) based on the dimensions of the image of the output map and that of the output space. For the SOR case, we have shown that the output redundancy resides in the quotient space of the output space modulo the image of the output map. Conversely, in the WOR scenario, output redundancy is associated with the infimal unobservability subspace that contains the image of the input map.
Throughout the paper, we have focused on a synthesis problem involving the concealment of  input  information by means of output allocation. We have shown that output redundancy in indeed necessary, and provided constructive solution for the considered classes of output-redundant systems. For future work, we aim to analyze the effectiveness of the proposed output allocation strategy against attackers employing more advanced reconstruction strategies. Additionally, exploring how output redundancy can be leveraged to enhance output-feedback controller design offers another stimulating direction.

\bibliographystyle{IEEEtran}
\bibliography{refs}

\appendix
\subsection{Numerical algorithm for calculating the matrix~$\mathtt L$}\label{app:calcL}
In this section, we introduce the algorithm for calculating $\mathtt L$, the matrix representation of $L \in \textbf{L}(\Ss^*)$ regarding System~\eqref{eq:system} while enabling the spectrum of $\bar A_L$ and $A_L^\flat$ ($A_L\coloneqq A+LC$) to be freely assigned. More specifically, we intend to find an $\mathtt L \in \R^{n \times p}$ such that
\begin{equation}\label{eq:spec_L}
     P A_L \Ss^* = 0, \quad
    \sigma \left( \bar A_L \right) = \bar \Lambda, \quad
    \sigma ( A_L^\flat ) = \Lambda^\flat,
\end{equation}
where $\bar \Lambda$ and $\Lambda^\flat$ denote symmetric sets of arbitrary complex numbers with compatible set numbers, and $P:\X \rightarrow \bar \X$ is the canonical projection as shown in Fig.~\ref{fig:weak}.

We first calculate a basis of $\Ss^*$ ($\dim (\Ss^*) = s$) using the unobservability subspace algorithm \cite[Theorem~18]{massoumnia1986geometric}, which is a matrix representation $\mathtt S \in \R^{n \times s}$ of the insertion map $S:\Ss^* \to \X$. Then, it is easy to calculate $\mathtt P \in \R^{(n-s) \times n}$ based on $\mathtt S$: 
\begin{equation*}
    \mathtt P \mathtt S = \mathtt 0_{(n-s) \times s}, \quad \rank \mathtt P = n-s.   
\end{equation*}
Subsequently, we find an arbitrary $\mathtt{L_0} \in \R^{n \times p}$ satisfying 
\begin{equation*}
    \mathtt P  (\mathtt A + \mathtt{L_0} \mathtt C)\mathtt S  = \mathtt 0_{(n-s)\times s }
\end{equation*}
We thereby compute $\mathtt{A_0}$ as
\begin{equation*}
    \mathtt{A_0} = \mathtt P (\mathtt A + \mathtt{L_0} \mathtt C) \mathtt P^\dagger.
\end{equation*}
Similar to calculating $\mathtt P$, we compute $\mathtt{P_Y} \in \R^{(p-z) \times p}$ ($\dim (C\Ss^*) = z$) by
\begin{equation*}
    \mathtt{P_Y} \mathtt{S_Y} = \mathtt 0_{(p-z) \times z}, \ \rank \mathtt{P_Y} = p-z,   
\end{equation*}
where $\mathtt{S_Y}\in \R^{p \times z}$ is a basis of the subspace $C\Ss^*$.
Moreover, calculate $\mathtt{C_0}:= \mathtt{P_Y} \mathtt C \mathtt P^\dagger$, and choose $\mathtt{L_{0a}}$ such that $\sigma(\mathtt{A_0} + \mathtt{L_{0a}C_0})= \bar \Lambda$.
After that, we let $\mathtt{L_1} = \mathtt{L_0} + \mathtt P^\dagger \mathtt{L_{0a} P_Y}$, where it can be verified that 
\begin{equation*}
    \sigma(\mathtt P (\mathtt A + \mathtt{L_1} \mathtt C) \mathtt P^\dagger) = \bar \Lambda.   
\end{equation*}
Now we define
\begin{equation*}
\begin{aligned}
    \mathtt{A_1} &\coloneqq \mathtt S^\dagger (\mathtt A + \mathtt{L_1} \mathtt C) \mathtt S    \\
    \mathtt{C_1} &\coloneqq \mathtt C \mathtt S.
\end{aligned}
\end{equation*}
Clearly, $\mathtt{A_1}$ and $\mathtt{C_1}$ are the matrix representation of the map $(A+L_1C)|\Ss^*$ and $C|\Ss^*$, respectively. Since $(\mathtt{C_1}, \mathtt{A_1})$ is observable (Proposition~\ref{lem:both-obsv}), there exists an $\mathtt{L_{1a}} \in \R^{s\times p}$ such that 
\begin{equation*}
    \sigma (\mathtt{A_1} + \mathtt{L_{1a}} \mathtt{C_1}) = \Lambda^\flat.
\end{equation*}
Now we let $\mathtt L = \mathtt{L_1} + \mathtt{SL_{1a}}$, which completes the computation of $\mathtt L$ satisfying \eqref{eq:spec_L}.
\subsection{Numerical algorithm for calculating the matrix~$\mathtt D$}\label{app:calcD}
%
Suppose we have calculated $\mathtt L$ as described in Appendix~\ref{app:calcL} with the desired spectrum for $A_L^\flat$ and $\bar A_L$. 
Then, we calculate $\mathtt{\bar L^\flat} = \mathtt{PLS_Y} \in \R^{(n-s)\times z}$, and we obtain the rank of $\mathtt{\bar L^\flat}$ as $\tau$.
Then we select a matrix $\mathtt{D^\flat_a} \in \R^{z \times \tau}$ such that
\begin{equation*}
    \rank (\mathtt{\bar L^\flat} \mathtt{D^\flat_a} ) = \tau.
\end{equation*}
We calculate a basis of $\bar \V^*$ (defined in \eqref{eq:def:barV*}) based on \cite[Theorem~4.3]{wonham1985linear}, denoted by $\mathtt V \in \R^{(n-p)\times v}$, where
\begin{equation*}
    \mathtt{\bar A_L} = \mathtt{P (A+LC) P}^\dagger, \quad 
    \mathtt{\bar D_{L_a}} = \mathtt{\bar L^\flat} \mathtt{D^\flat_a}, \quad
    \mathtt{\bar C} = \mathtt {P_Y C P}^\dagger .
\end{equation*}
Subsequently, we check the solution of
\begin{equation}\label{eq:imVimLW}
    \mathtt{V} \mathtt{\theta}_i = \mathtt{\bar D_{L_a}}[i],\ i \in \{1,\ldots , \tau\}
\end{equation}
where $\mathtt{\bar D_{L_a}}[i]$ represents the $i$-th column of the matrix $\mathtt{\bar D_{L_a}}$. If \eqref{eq:imVimLW} cannot be solved for any $i \in \{1,\ldots , \tau\}$, then we calculate $\mathtt D = \mathtt{S_Y D^\flat_a}$, and we obtain the desired $\mathtt D$.

Otherwise, we record the index $i$ where \eqref{eq:imVimLW} has a solution for $\theta_i$, and we discard the corresponding columns of $\mathtt{\bar D_{L_a}}$, resulting in a new matrix $\mathtt{\bar D_{L_b}} \in \R^{(n-p) \times \tau_b}$ with fewer number of columns ($\tau_b < \tau$). Then, we compute $\mathtt D = \mathtt{S_Y D^\flat_a}$, and we obtain the desired $\mathtt D$.

\begin{IEEEbiography}[{\includegraphics[width=1in,height=1.25in,clip,keepaspectratio]{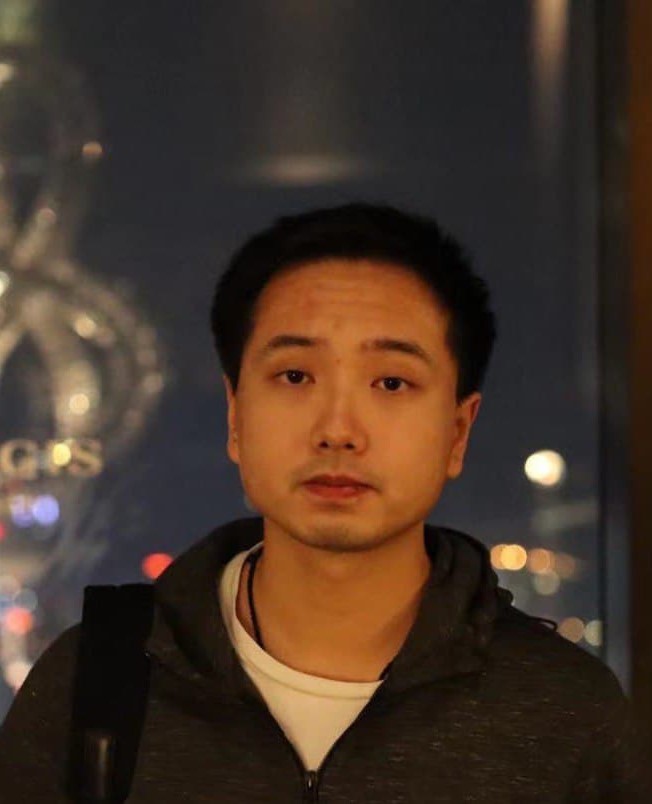}}]{Guitao Yang}
received the B.Eng. degree in Electrical and Electronic Engineering from The University of Manchester, Manchester, UK, in 2016. He received the M.Sc. and Ph.D. degrees in Control Systems from Imperial College London, London, U.K., in 2017 and 2022, respectively. He is currently a Research Associate with the Department of Electrical and Electronic Engineering, Imperial College London. His research interests include distributed state estimation, fault-tolerant observers, and the geometric approach.
\end{IEEEbiography}

\vspace{-0.5 true cm}

\begin{IEEEbiography}
[{\includegraphics[width=1in,height=1.25in,clip,keepaspectratio]{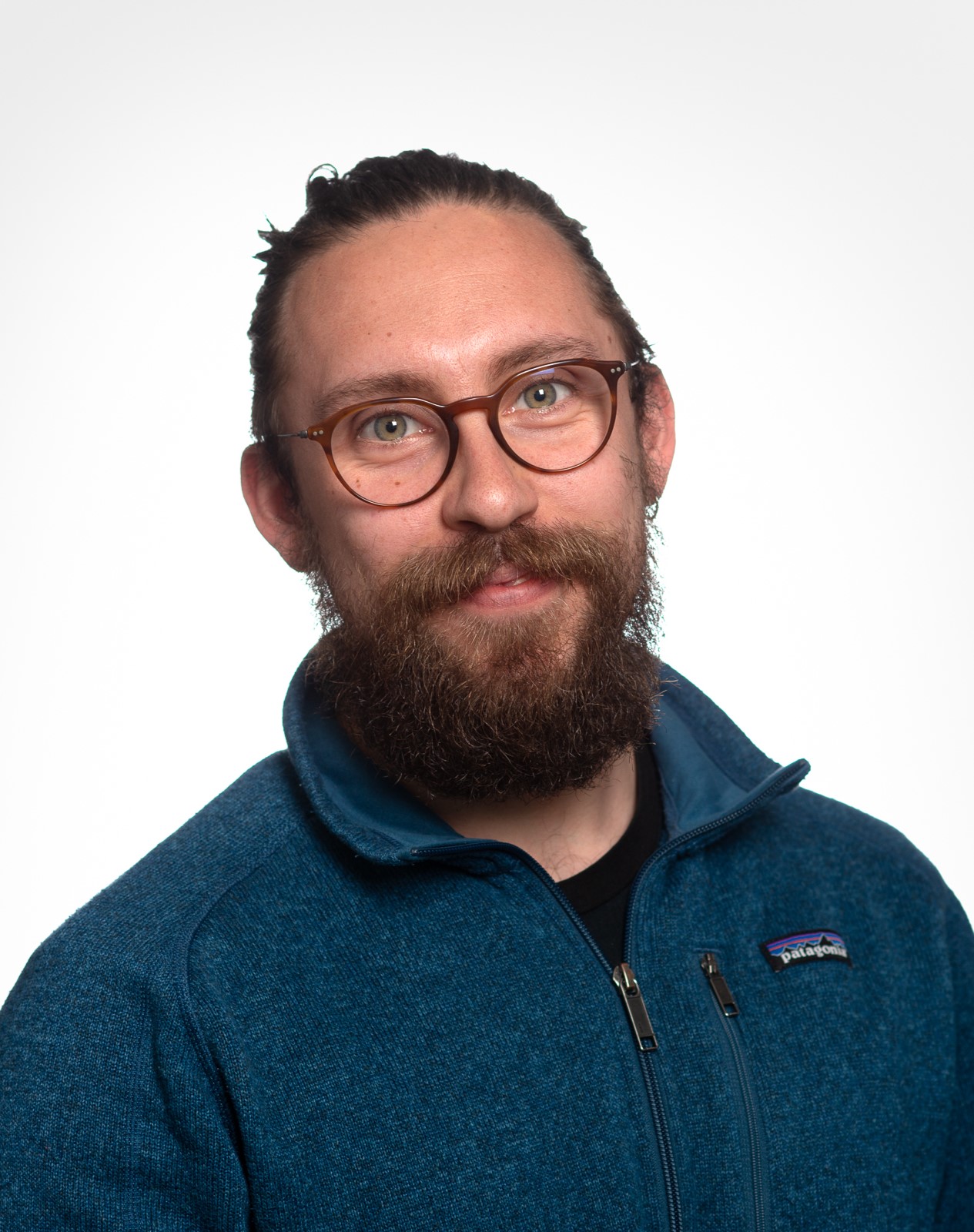}}]{Alexander J. Gallo}
received the M.Eng. in Electrical and Electronic Engineering from Imperial College London, London, UK, in 2016. He received his Ph.D. degree in Control Engineering from Imperial College London, London, UK, in 2021. From 2021 to 2024 he was a postdoctoral researcher at the Delft Center for Systems and Control, Delft University of Technology, Delft, the Netherlands. 
His main research interests include distributed cyber-security and fault tolerant control for large-scale interconnected systems, with a particular focus on energy distribution networks, as well as health-aware and fault tolerant control of wind turbines.
\end{IEEEbiography}

\vspace{-0.8 true cm}

\begin{IEEEbiography}[{\includegraphics[width=1in,height=1.25in,clip,keepaspectratio]{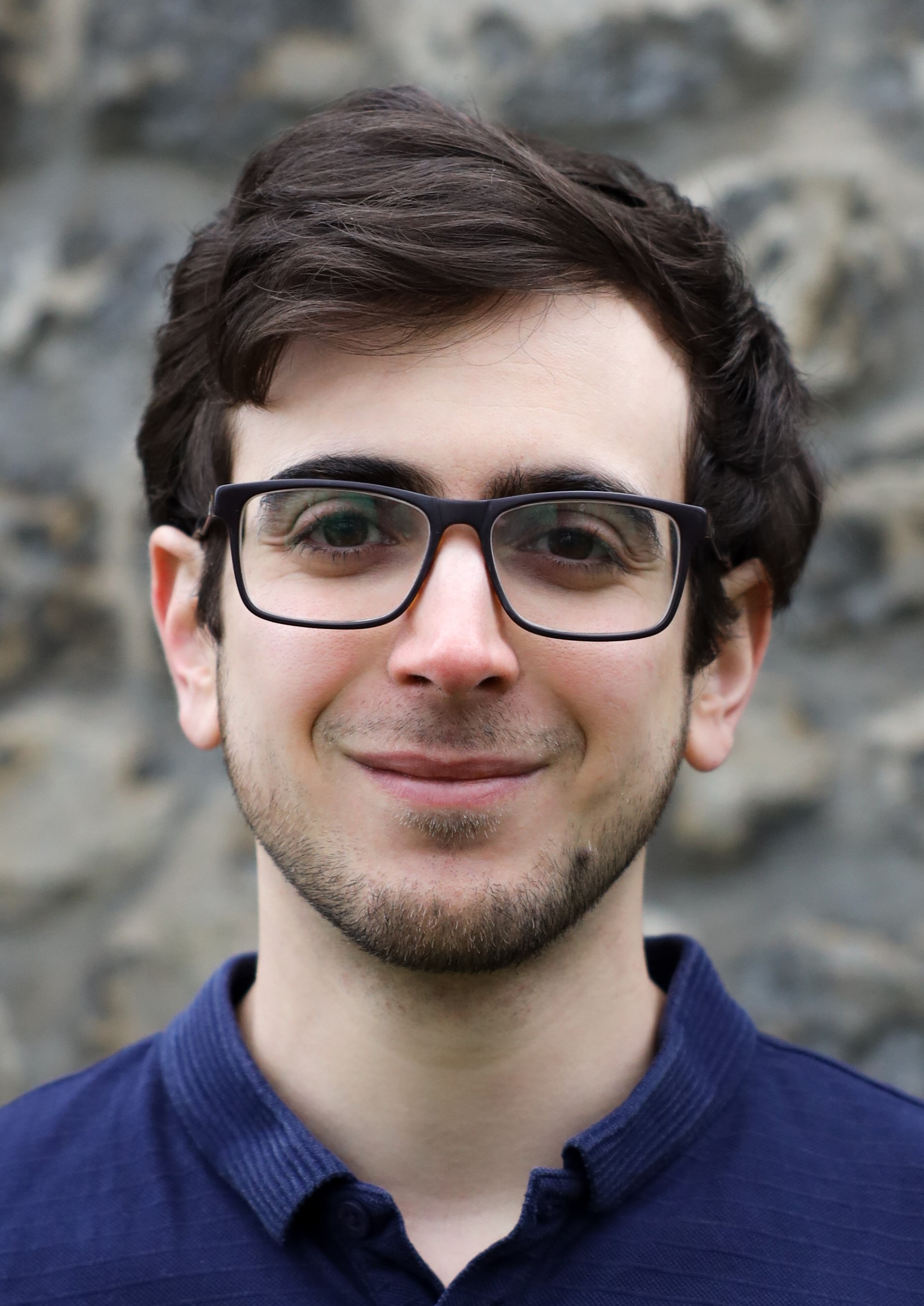}}]
{Angelo Barboni} received the M.Sc. in Electrical and Control Engineering, with honors, from the University of Trieste, Italy, in 2015, and the Ph.D. degree in Control Engineering from Imperial College London, UK, in 2021, where he was awarded a scholarship with the High-Performance Embedded and Distributed Systems (HiPEDS) Centre for Doctoral Training. 
From 2021 to 2023, he was a Research Associate with the Control and Power group in the Department of Electrical and Electronic Engineering, Imperial College London. 
Since 2023, he is employed as a researcher within Group Accumulation Management with the Zurich Insurance Group, Zurich, Switzerland, working on models for risk quantification in the cyber security domain. 
Besides security, his scientific interests include (distributed) estimation, anomaly detection, fault-tolerant control, and optimization.
\end{IEEEbiography}

\vspace{-0.8 true cm}

\begin{IEEEbiography}[{\includegraphics[width=1in,height=1.25in,clip,keepaspectratio]{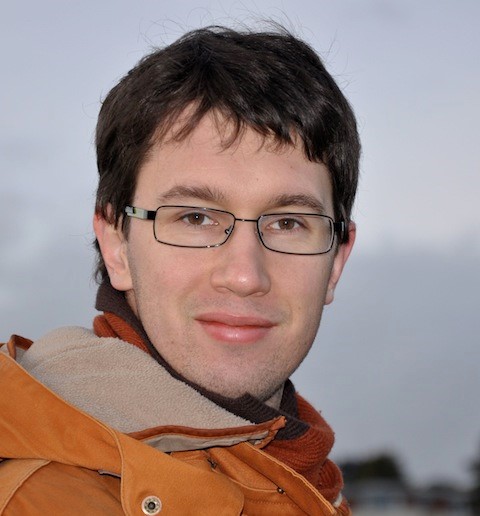}}]{Riccardo Ferrari} (Senior Member, IEEE)
received the Laurea degree with printing honors and the Ph.D. degree from University of Trieste, Italy. He held both academic and industrial R\&D positions, in particular as researcher in the field of process instrumentation and control for the steel-making sector. He is a Marie Curie alumnus and currently an Associate Professor with the Delft Center for Systems and Control, Delft University of Technology, The Netherlands. He is a co-recipient of the O. Hugo Schuck Award 2023. 
His research interests include fault tolerant control and fault diagnosis and attack detection in large-scale cyber–physical systems, with applications to wind energy generation, electric mobility, and cooperative autonomous vehicles.
\end{IEEEbiography}

\vspace{-0.5 true cm}

\begin{IEEEbiography}[{\includegraphics[width=1in,height=1.25in,clip,keepaspectratio]{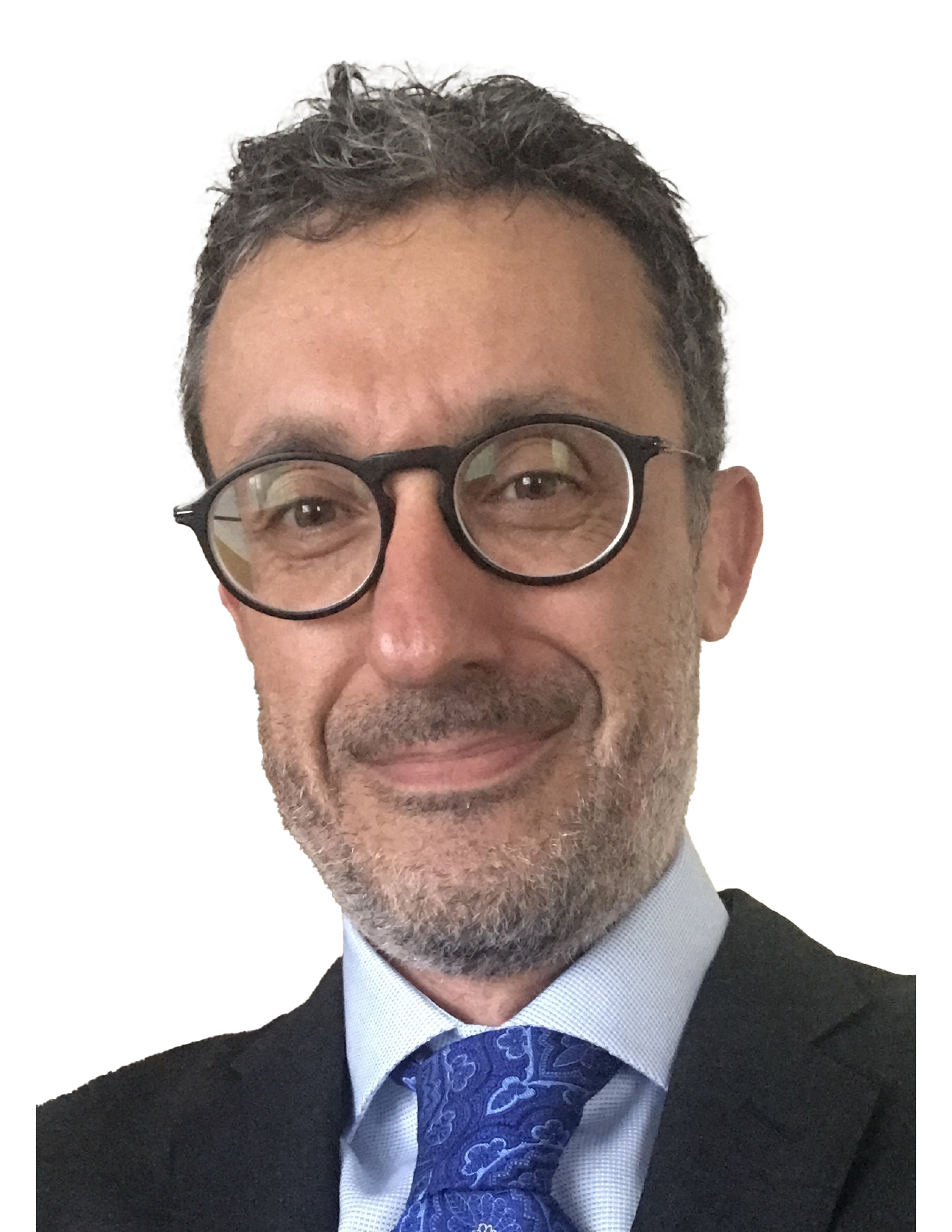}}]{Andrea Serrani} (Member, IEEE)
received the Laurea (B.Eng.) degree in Electrical Engineering in 1993, and the Ph.D. degree in 1997 from the University of Ancona, Italy. From 1994 to 1999, he was a Fulbright Fellow at Washington University in St. Louis, MO, where he obtained the M.S. and D.Sc. degrees in Systems Science and Mathematics in 1996 and 2000, respectively. Since 2002, he has been with the Department of Electrical and Computer Engineering of The Ohio State University, where he is currently a Professor and Associate Chair.
He has held visiting positions at the Universities of Bologna and Padua, Italy, and multiple summer faculty fellowships at AFRL. The research activity of Prof. Serrani lies at the intersection of nonlinear, adaptive and geometric control theory with applications in aerospace and marine systems, fluidic systems, robotics and automotive engineering. His work has been supported by AFRL, NSF, Ford Motor Co. and NASA, among others.
Prof. Serrani has authored or co-authored more than 150 articles in journals, proceedings of international conferences and book chapters, and is the co-author of the book {\em Robust Autonomous Guidance: An Internal Model Approach} published by Springer-Verlag. Prof. Serrani was a Distinguished Lecturer of the IEEE CSS, and served as Editor-in-Chief of the {\em IEEE Trans. on Control Systems Technology} (2017-2024) and as an Associate Editor for the same journal (2010-2016), {\em Automatica} (2008-2014) and the {\em Int. Journal of Robust and Nonlinear Control} (2006-2014). He served on the Conference Editorial Boards of IEEE CSS and IFAC, as Program Chair for the 2019 ACC, and as General Co-chair for the 2022 IEEE CDC. Currently, he serves as VP-Publications and on the Board of Governors of the IEEE CSS.
\end{IEEEbiography}

\vspace{-0.5 true cm}

\begin{IEEEbiography}[{\includegraphics[width=1in,height=1.25in,clip,keepaspectratio]{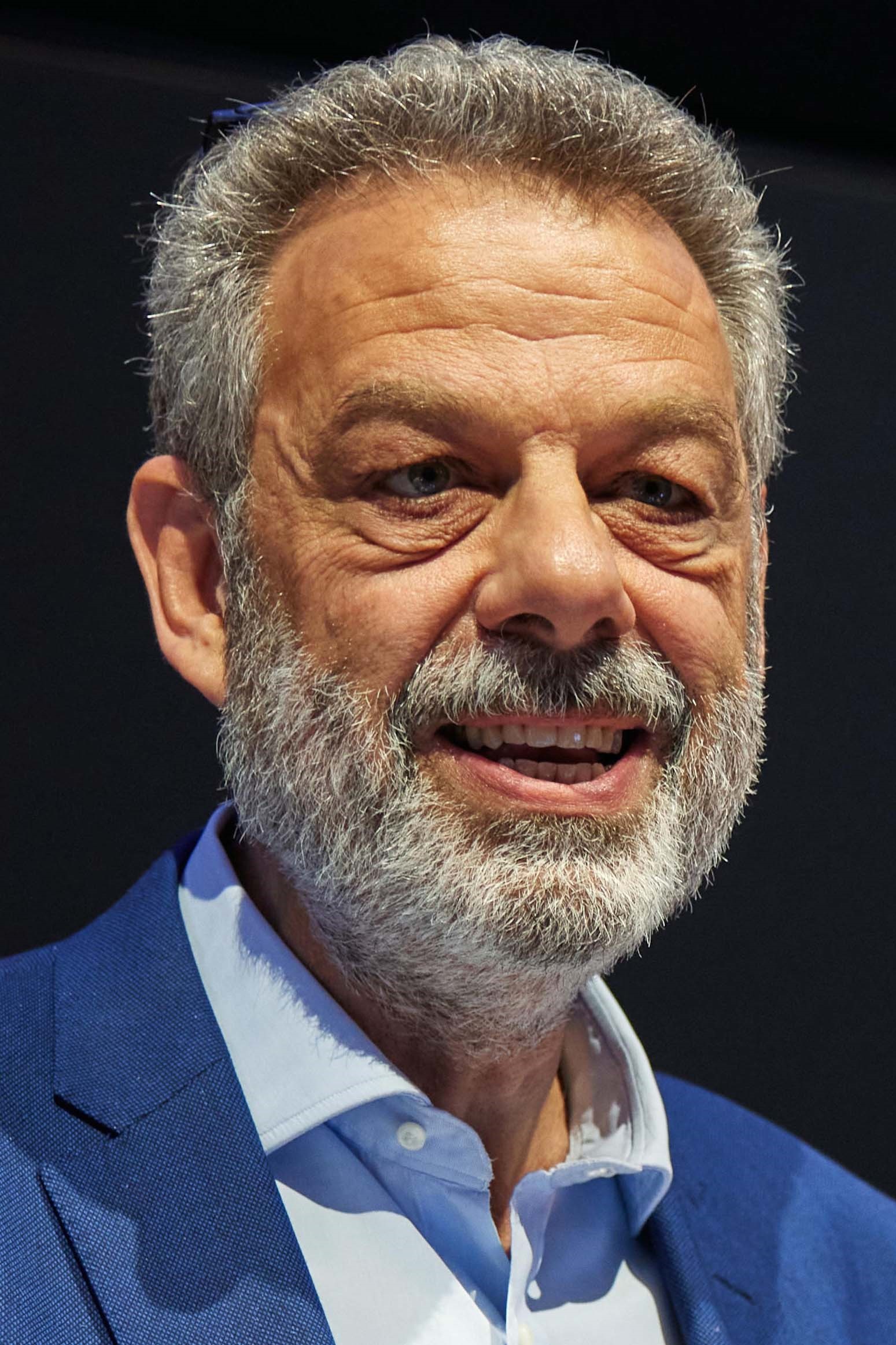}}]{Thomas Parisini} (Fellow, IEEE) received the Ph.D. degree in electronic engineering and computer science from the University of Genoa, Italy, in 1993. He was an Associate Professor at Politecnico di Milano, Milano, Italy. He currently holds the Chair of Industrial Control, and is the Head of the Control and Power Research Group, Imperial College London, London, U.K. He also holds a Distinguished Professorship at Aalborg University, Denmark. Since 2001, he has been the Danieli Endowed Chair of automation engineering with the University of Trieste, Trieste, Italy, where from 2009 to 2012, he was the Deputy Rector.  In 2023, he held a “Scholar-in-Residence” visiting position with Digital Futures-KTH, Stockholm, Sweden. He has authored or coauthored a research monograph in the Communication and Control Series, Springer Nature, and more than 400 research papers in archival journals, book chapters, and international conference proceedings. Dr. Parisini was the recipient of the Knighthood of the Order of Merit of the Italian Republic for scientific achievements abroad awarded by the Italian President of the Republic in 2023. In 2018 he received the Honorary Doctorate from the University of Aalborg, Denmark and in 2024, the IEEE CSS Transition to Practice Award. Moreover, we was awarded the 2007 IEEE Distinguished Member Award, and was co-recipient of the IFAC Best Application Paper Prize of the Journal of Process Control, Elsevier, for the three-year period 2011–2013 and of the 2004 Outstanding Paper Award of IEEE TRANSACTIONS ON NEURAL NETWORKS. In 2016, he was awarded as Principal Investigator with Imperial of the H2020 European Union flagship Teaming Project KIOS Research and Innovation Centre of Excellence led by the University of Cyprus with an overall budget of over 40 million Euros. He was the 2021–2022 President of the IEEE Control Systems Society and he was the Editor-in-Chief of IEEE TRANSACTIONS ON CONTROL SYSTEMS TECHNOLOGY (2009–2016). He was the Chair of the IEEE CSS Conference Editorial Board (2013–2019). He was the associate editor of several journals including the IEEE TRANSACTIONS ON AUTOMATIC CONTROL and the IEEE TRANSACTIONS ON NEURAL NETWORKS. He is currently an Editor of Automatica and the Editor-in-Chief of the European Journal of Control. He was the Program Chair of the 2008 IEEE Conference on Decision and Control and General Co-Chair of the 2013 IEEE Conference on Decision and Control. He is a Fellow of IFAC. He is a Member of IEEE TAB Periodicals Review and Advisory Committee.
\end{IEEEbiography}

\end{document}